\documentclass[11pt,a4paper]{article}
\usepackage{fullpage}
\usepackage{microtype}
\usepackage{amssymb,amsmath,amsthm} \usepackage{graphicx}
\usepackage{color} \usepackage{xspace}
\usepackage[numbers]{natbib} \usepackage{hyperref,color}
\usepackage{xspace} 
\usepackage[T1]{fontenc}
\graphicspath{{figures/}}
\newtheorem{theorem}{Theorem} 
\newtheorem{lemma}[theorem]{Lemma}

\newtheorem{corollary}[theorem]{Corollary}

\newcommand{\oset}{\mathcal{O}}
\newcommand{\otuple}{\ensuremath{\{o_1,o_2,o_3\}}}

\newcommand{\ch}{\textsc{ch}}

\newbox\ProofSym
\setbox\ProofSym=\hbox{%
	\unitlength=0.18ex%
	\begin{picture}(10,10)
	\put(0,0){\framebox(9,9){}}
	\put(0,3){\framebox(6,6){}}
	\end{picture}}


\title{Polygon Queries for Convex Hulls of Points\footnote{
This research was supported by 
the MSIT (Ministry of Science and ICT), Korea, under the SW Starlab support program (IITP-2017-0-00905) supervised by the IITP (Institute for Information \& communications Technology Promotion).}}

\author{Eunjin Oh\thanks{Max Planck Institute for Informatics, Saarbr\"ucken, Germany. \tt{eoh@mpi-inf.mpg.de}} \and Hee-Kap Ahn\thanks{Pohang University of Science and Technology, Pohang, Korea. \tt{heekap@postech.ac.kr}}}

\graphicspath{{figures/}}
\pagestyle{plain}

\begin{document}
\date{}
	\maketitle

\begin{abstract}
  We study the following range searching problem: Preprocess a set $P$
  of $n$ points in the plane with respect to a set $\oset$ of $k$ orientations
  in the plane so that given an $\oset$-oriented convex polygon 
  $Q$, the convex hull of $P\cap Q$ can
  be computed efficiently, where an $\oset$-oriented polygon is a
  polygon whose edges have orientations in $\oset$.  We present a data
  structure with $O(nk^3\log^2n)$ space and $O(nk^3\log^2n)$ 
  construction time, and an $O(h+s\log^2 n)$-time query algorithm 
  for any query $\oset$-oriented convex $s$-gon $Q$, where $h$ is the complexity of the convex hull. 
  Also, we can compute the perimeter or area of the convex hull of $P\cap Q$ in $O(s\log^2n)$ time
  using the data structure. 
\end{abstract}

\section{Introduction}
Range searching is one of the most thoroughly
studied problems in computational geometry
for decades from 1970s. 
Range trees and $kd$-trees 
were proposed as data structures for orthogonal range searching, 
and their sizes and query times have been improved over the years.
The most efficient data structures for orthogonal range searching
for points in the plane~\cite{Chazelle-1986} and  in higher
dimensions~\cite{Chazelle-1990} are due to
Chazelle.

There are variants of the range searching problem that
allow other types of query ranges, such as circles or triangles.
Many of them can be solved using partition trees or a combination
of partition trees and cutting trees.
The simplex range searching problem, which is a higher dimensional
analogue of the triangular range searching, has gained much attention
in computational geometry as many other problems with more
general ranges can be reduced to it. 
As an application, it can be used to solve the hidden surface removal in
computer graphics~\cite{Agarwal-1993,deBerg-1994}.

The polygon range searching is a generalization of the simplex range searching
in which the search domain is a convex polygon.
Willard~\cite{Willard-1982} gave a data structure, called the polygon tree,
with $O(n)$ space and
an $O(n^{0.77})$-time algorithm for counting the number of
points lying inside an arbitrary query polygon of constant complexity. The query time 
was improved later by Edelsbrunner and Welzl~\cite{Edelsbrunner-1986} to $O(n^{0.695})$.
By using the stabbing numbers of spanning trees,
Chazelle and Welzl~\cite{Chazelle-1989} gave a data structure of size
$O(n\log n)$ with an \mbox{$O(\sqrt{kn}\log n)$-time} query algorithm for computing 
the number of points lying inside a query convex $k$-gon
for arbitrary values of $k$ with $k\leq n$. When $k$ is fixed 
for all queries, the size of the data structure drops to $O(n)$.
Quite a few heuristic techniques and frameworks have been proposed to
process polygon range queries on large-scale spatial data in a parallel and distributed
manner on top of MapReduce~\cite{Dean-2008}.
For overviews of results on range searching, see the survey 
by Agarwal and Erickson~\cite{Agarwal-1999}.

In this paper, we consider the following polygon range searching problem:
Preprocess a set $P$ of $n$ points with respect
to a set $\oset$ of $k$ orientations in the plane 
so that given an $\oset$-oriented convex polygon $Q$,
the convex hull of $P\cap Q$, and its perimeter and area, can be
computed efficiently, where an $\oset$-oriented polygon is a polygon
whose edges have orientations in $\oset$.
Here, an \emph{orientation} is defined as a unit vector in the plane. We say
that an edge has an orientation $o$ if the edge is parallel to $o$.

Whereas orthogonal and simplex range queries can be carried out efficiently,
it is quite expensive for queries of arbitrary polygons in general. This is a
phenomenon that occurs in many other geometry problems. In an effort to
overcome such inefficiency and provide robust computation,
there have been quite a few works on ``finite orientation geometry'',
for instance, computing distances~\cite{Widmayer-1987} in fixed orientations,
finding the contour of the union of a collection of polygons with edges
of restricted orientations~\cite{Souvaine-1992}, and
constructing Voronoi diagrams~\cite{Agarwal-2015,Chen-2006} using
a distance metric induced by a convex $k$-gon. In the line of this research,
we suggests the polygon queries whose edges have orientations from a fixed set of
orientations. Such a polygon query, as an approximation of an arbitrary polygon,
can be used in appropriate areas of application, for instance, in VLSI-design, and
possibly takes advantages of the restricted number of orientations and robustness
in computation.

\paragraph{Previous Works.}
%
Brass et al.~\cite{Brass-2013} gave a data structure on $P$
for a query range $Q$ and a few geometric extent measures, including
the convex hull of $P\cap Q$ and its perimeter or area.
They gave a data structure with
$O(n\log^2 n)$ space and $O(n\log^3n)$ construction time
that given a query axis-parallel rectangle $Q$, 
reports the convex hull of $P\cap Q$ in $O(\log^5 n+h)$ time and
its perimeter or area in $O(\log^5n)$ time, where
$h$ is the complexity of the convex hull.

Both the data structure and query algorithm for reporting the convex hull of $P\cap Q$ 
were improved by Modiu et
al.~\cite{CCCG-Nadeem-2013}. They gave a data structure with
$O(n\log n)$ space and $O(n\log n)$ construction time that
given a query axis-parallel rectangle $Q$, reports the convex hull of
$P\cap Q$ in $O(\log^2 n+ h)$ time.

For computing the perimeter of the convex hull of $P\cap Q$,
the running time of the query algorithm by Brass et al. was improved by
Abrahamsen et al.~\cite{Abrahamsen-2017}.
For a query axis-parallel rectangle $Q$, their data structure supports $O(\log^3 n)$ query time.
Also, they presented a data structure of size $O(n\log^3 n)$ 
for supporting $O(\log^4 n)$ query time for a $5$-gon whose edges have three predetermined orientations. 



\paragraph{Our Result.} Let $P$ be a set of $n$ points and let
$\oset$ be a set of $k$ orientations in the plane.
\begin{itemize}
\item We present a data structure on $P$ that allows us to compute the
  perimeter or area of the convex hull of points of $P$ contained in
  any query $\oset$-oriented convex $s$-gon in
    $O(s\log^2 n)$ time. We can construct the data structure with
  $O(nk^3\log^2n)$ space in $O(nk^3\log^2 n)$ time. Note that
  $s$ is at most $2k$ because $Q$ is convex.  
  When the query polygon has a constant complexity,
  as for the case of $\oset$-oriented triangle queries, the
    query time is only $O(\log^2 n)$.
\item For queries of reporting the convex hull of the points contained
  in a query $\oset$-oriented convex $s$-gon, the query algorithm
  takes $O(h)$ time in addition to the query times for the
    perimeter or area case, without increasing the size and
  construction time for the data structure, 
  where $h$ is the complexity of the convex hull.
\item For $k=2$, 
  we can construct the data
    structure with $O(n\log n)$ space in $O(n\log n)$ time whose query
    time is $O(\log^2 n)$ for computing the perimeter or area of
    the convex hull of $P\cap Q$ and $O(\log^2 n+h)$ for reporting
    the convex hull.
\item Our data structure can be used to 
  improve the $O(n\log^4 n)$-time algorithm
  by Abrahamsen et al.~\cite{Abrahamsen-2017}
  for computing the minimum perimeter-sum bipartition of $P$.
  Their data structure requires $O(n\log^3 n)$ space and
  allows to compute the perimeter of the
  convex hull of points of $P$ contained in a $5$-gon whose edges
  have three predetermined orientations.
  If we replace their data structure 
  with ours, we can obtain an $O(n\log^2 n)$-time
algorithm for their problem using $O(n\log^2 n)$ space.
\end{itemize}

\section{Axis-Parallel Rectangle Queries for Convex Hulls}\label{sec:rect} 
We first consider axis-parallel rectangle queries. Given a set $P$ of $n$ points in
the plane, Modiu et al.~\cite{CCCG-Nadeem-2013} gave a data
structure on $P$ with $O(n\log n)$ space that reports the
convex hull of $P\cap Q$ in $O(\log^2 n+ h)$ time for any query axis-parallel 
rectangle $Q$, where $h$ is the complexity of the convex hull.  We
show that their data structure with a modification allows us to
compute the perimeter of the convex hull of $P\cap Q$ in
$O(\log^2 n)$ time. 

\subsection{Data Structure}\label{sec:DS}
We first briefly introduce the data structure given by Modiu et al.,
which is called a \emph{two-layer grid-like range tree}. 
To obtain a data structure for computing the parameter of the convex hull of $P\cap Q$
for a query axis-parallel rectangle $Q$,
we store information in each node of the two-layer grid-like range tree. 

\paragraph{Two-layer Grid-like Range Tree.}
The two-layer grid-like range tree is a variant of the two-layer standard range tree on $P$.
The two-layer standard range tree on $P$ is a two-level balanced
binary search tree~\cite{CGbook}.  The level-$1$ tree is a balanced
binary search tree $T_x$ on the points of $P$ with respect to their $x$-coordinates.
Each node $\alpha$ in $T_x$ corresponds to a vertical slab $I(\alpha)$. 
The node $\alpha$ has a balanced binary search tree on the points of 
$P\cap I(\alpha)$ with respect to their $y$-coordinates as its
level-2 tree.  In this way, each node $v$ in a level-2 tree
corresponds to an axis-parallel rectangle $B(v)$.

For any query axis-parallel rectangle $Q$, there is a set $\mathcal{V}$ of 
$O(\log^2 n)$ nodes of the level-2 trees such that the rectangles
$B(v)$ of $v\in\mathcal{V}$ are pairwise interior disjoint,
$Q\cap B(v)\neq\emptyset$ for every $v\in\mathcal{V}$, and
$\bigcup_{v\in \mathcal{V}}(P\cap B(v))=P\cap Q$.
For $v\in\mathcal{V}$,
we call $B(v)$ a \emph{canonical cell} for $Q$. 
One drawback of this structure is that the canonical cells for $Q$ are
not aligned with respect to their horizontal sides in general.
See Figure~\ref{fig:grid}(a).

 \begin{figure}
 	\begin{center}
 		\includegraphics[width=0.8\textwidth]{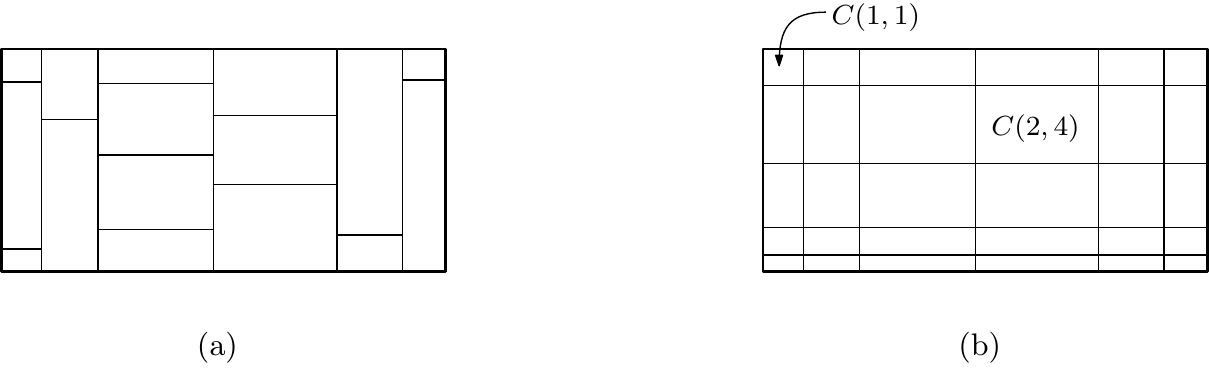}
 		\caption{\small\label{fig:grid} (a) Canonical cells in a standard range tree.
 		(b) Canonical cells in a grid-like range tree.}
 	\end{center}
 \end{figure}

To overcome this drawback, Modiu et al.~\cite{CCCG-Nadeem-2013} gave
the \emph{two-layer grid-like range tree} so that the canonical
cells for any query axis-parallel rectangle $Q$ are aligned
across all nodes $\alpha$ in the level-1 tree with
  $I(\alpha) \cap Q\neq\emptyset$. The two-layer grid-like
range tree is also a two-level tree whose level-1 tree is a balanced
binary search tree $T_x$ on the points of $P$ with respect to their $x$-coordinates.
Each node $\alpha$ of $T_x$ is associated with the
level-2 tree $T_y(\alpha)$ which is a binary search tree
on the points of $P\cap I(\alpha)$. But, unlike the standard range
tree, $T_y(\alpha)$ is obtained from $T_y$ by removing the subtrees rooted at
all nodes whose corresponding rectangles have no point in $P\cap I(\alpha)$ and by contracting all nodes which have only
one child, where $T_y$ is a balanced binary search tree on the points of $P$
with respect to their $y$-coordinates.  Therefore, $T_y(\alpha)$ is not balanced but
a full binary tree of height $O(\log n)$, and
it is called a \emph{contracted tree} on $P\cap I(\alpha)$. 
By construction, the
canonical cells for any axis-parallel rectangle $Q$ are aligned.

\begin{lemma}[\cite{CCCG-Nadeem-2013}]
  The two-layer grid-like range tree on a set of $n$ points in the plane
  can be computed in $O(n\log n)$ time. Moreover, its size is $O(n\log n)$.
\end{lemma}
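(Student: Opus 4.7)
The plan is to establish the size bound first, since it is essentially a counting argument, and then to give a bottom-up construction that achieves the $O(n \log n)$ time bound by exploiting the fact that each point appears in the $x$-slabs of only $O(\log n)$ nodes of $T_x$.

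For the size bound, I would first observe that the level-$1$ tree $T_x$ is a balanced BST on $n$ points and therefore has $O(n)$ nodes. For a fixed node $\alpha \in T_x$, the contracted tree $T_y(\alpha)$ is a full binary tree whose leaves are in bijection with $P \cap I(\alpha)$, so it has at most $2|P \cap I(\alpha)|-1 = O(|P \cap I(\alpha)|)$ nodes. The slabs $I(\alpha)$ of nodes $\alpha$ at a common depth of $T_x$ are pairwise interior-disjoint, so the sum of $|P \cap I(\alpha)|$ over nodes at a fixed depth is $O(n)$. Summing over the $O(\log n)$ depths of $T_x$ gives a total size of $O(n \log n)$.

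For the construction time, the bulk of the work is building all the contracted trees. I would first sort $P$ by $x$-coordinate and by $y$-coordinate in $O(n \log n)$ time, build $T_x$ in $O(n)$ time from the $x$-sorted list, and build the balanced BST $T_y$ on $P$ sorted by $y$ in $O(n)$ time. Then I would preprocess $T_y$ for $O(1)$-time LCA queries in $O(n)$ time using a standard LCA data structure. The key subroutine is: given the list $L(\alpha)$ of points in $P \cap I(\alpha)$ sorted by $y$, construct $T_y(\alpha)$ in $O(|L(\alpha)|)$ time. This works because the leaves of $T_y(\alpha)$ are exactly the $T_y$-leaves of the points in $L(\alpha)$, and the internal nodes of $T_y(\alpha)$ are exactly the LCAs in $T_y$ of consecutive points in $L(\alpha)$; a Cartesian-tree-like sweep using $O(1)$-time LCA queries assembles $T_y(\alpha)$ in linear time.

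To obtain the $y$-sorted lists $L(\alpha)$ efficiently, I would walk $T_x$ bottom-up: for a leaf $\alpha$ containing a single point, $L(\alpha)$ is trivial; for an internal node $\alpha$ with children $\alpha_\ell,\alpha_r$, $L(\alpha)$ is the merge of $L(\alpha_\ell)$ and $L(\alpha_r)$, which takes $O(|L(\alpha)|)$ time. Thus the total work across all $\alpha$ is $\sum_\alpha O(|P \cap I(\alpha)|) = O(n \log n)$, matching the size bound.

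The main obstacle is building each $T_y(\alpha)$ within a budget proportional to $|P\cap I(\alpha)|$ rather than $|P|$; naively extracting $T_y(\alpha)$ from $T_y$ by pruning and contracting would cost $\Theta(n)$ per node and give $\Theta(n^2)$ overall. The LCA-based linear-time construction from the $y$-sorted list is what makes the whole recipe come in at $O(n\log n)$; everything else is standard bookkeeping about range-tree levels and the disjointness of slabs at a common depth.
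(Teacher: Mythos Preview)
The paper does not prove this lemma; it is quoted verbatim from \cite{CCCG-Nadeem-2013} and no argument is given in the present paper. So there is no ``paper's own proof'' to compare against.

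Your argument is a correct and self-contained proof of the cited result. The size bound is the standard range-tree counting argument (each point lies in $O(\log n)$ slabs $I(\alpha)$, and each contracted tree $T_y(\alpha)$ has size linear in $|P\cap I(\alpha)|$). For the construction time, your key observation---that the internal nodes of $T_y(\alpha)$ are exactly the LCAs in $T_y$ of $y$-consecutive points of $P\cap I(\alpha)$, so $T_y(\alpha)$ can be assembled in $O(|P\cap I(\alpha)|)$ time via a Cartesian-tree sweep with $O(1)$ LCA queries---is precisely the right idea and is what distinguishes the grid-like range tree from a naive $\Theta(n^2)$ build. The bottom-up merging to maintain $y$-sorted lists at each node of $T_x$ is the usual range-tree trick. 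Everything checks out.
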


\paragraph{Information Stored in a Node of a Level-2 Tree.}
To compute the perimeter of the convex hull of $P\cap Q$ for a query
axis-parallel rectangle $Q$ efficiently, we store additional information
for each node $v$ of the level-2 trees as follows. The node $v$ has two children in the
level-2 tree that $v$ belongs to. Let $u_1$ and $u_2$ be the two children of $v$ such that
$B(u_1)$ lies above $B(u_2)$.  By construction, $B(v)$ is partitioned
into $B(u_1)$ and $B(u_2)$.

Consider the convex hull $\ch(v)$ of $B(v)\cap P$ and the convex hull
$\ch(u_i)$ of $B(u_i)\cap P$ for $i=1,2$.  There are at most two edges
of $\ch(v)$ that appear on neither $\ch(u_1)$ nor $\ch(u_2)$.  We call
such an edge a \emph{bridge} of $\ch(v)$ with respect to $\ch(u_1)$
and $\ch(u_2)$, or simply a bridge of $\ch(v)$. Note that a bridge of
$\ch(v)$ has one endpoint on $\ch(u_1)$ and the other endpoint on
$\ch(u_2)$. We call the bridge of $\ch(v)$ whose clockwise endpoint
lies on $\ch(u_1)$ and counterclockwise endpoint lies on $\ch(u_2)$
along the boundary of $\ch(v)$ the \emph{$cw$-bridge} of $\ch(v)$. We call the
other bridge of $\ch(v)$ the \emph{$ccw$-bridge} of $\ch(v)$. 
See Figure~\ref{fig:basic-op}(a).

 \begin{figure}
   \begin{center}
     \includegraphics[width=0.7\textwidth]{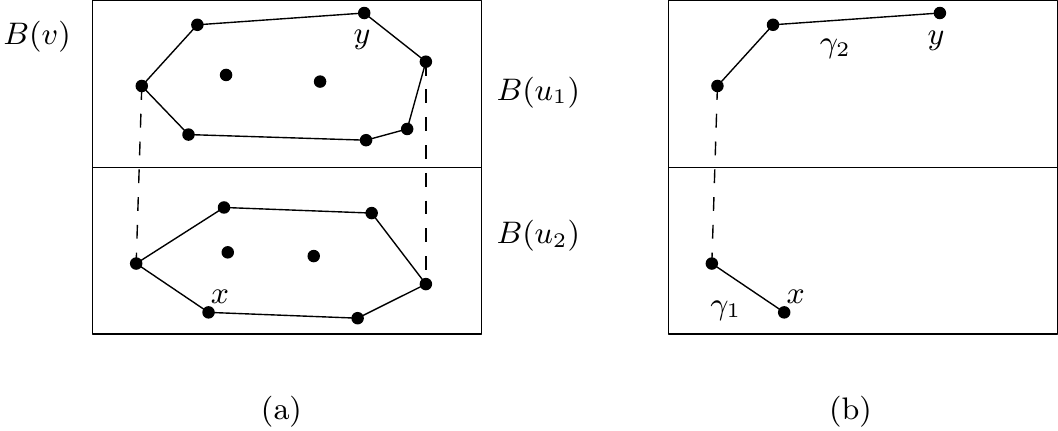}
     \caption{\small\label{fig:basic-op} (a) A node $v$ of the level-2
       tree has two children $u_1$ and $u_2$ such that
       $B(v)$ is partitioned into $B(u_1)$ $B(u_2)$. The dashed line segments
       are bridges stored in this node. The left one is the $cw$-bridge, and the right one
       is the $ccw$-bridge.
       (b) The part of $\ch(v)$ from $x$ to $y$ in clockwise order
       is decomposed into two polygonal curves $\gamma_1$ and $\gamma_2$
       with respect to the bridges
       of $\ch(v)$.}
   \end{center}
 \end{figure}

For each node $v$ of the level-2 trees, we store the two bridges of $\ch(v)$ and
the length of each polygonal chain of $\ch(v)$ lying between the two
bridges.  In addition, we store the length of each polygonal chain
connecting an endpoint $e$ of a bridge of $\ch(v)$ and an endpoint
$e'$ of a bridge of $\ch(p(v))$ for the parent node $p(v)$ of $v$ 
along the boundary of $\ch(v)$ if $e$ and $e'$ are contained in $B(v)$.  We do this for every
pair of the endpoints of the bridges of $\ch(v)$ and $\ch(p(v))$ that
are contained in $B(v)$.  Since only a constant number of bridges are
involved, the information stored for $v$ is also of constant size. Each
bridge can be computed in time linear in the number of vertices of
$\ch(u)$ which do not appear on $\ch(v)$ for a child $u$ of $v$.  The
length of each polygonal chain we store for $v$ can also be computed
in this time.  Notice that a vertex of $\ch(u)$ which does not appear on
$\ch(v)$ does not appear on $\ch(v')$ for any ancestor $v'$ of
$v$. Therefore, the total running time for computing the bridges is
linear in the total number of points corresponding to the leaf nodes of the level-2 trees, which is $O(n\log^2 n)$.
 


We will use the following lemma for our query algorithm. 

\begin{lemma}\label{lem:basic-op}
  Given a node $v$ of a level-2 tree and two vertices $x, y$ of
  $\ch(v)$, we can compute the length of the part of the boundary of
  $\ch(v)$ from $x$ to $y$ in clockwise order along the boundary of
  $\ch(v)$ in $O(\log n)$ time.
\end{lemma}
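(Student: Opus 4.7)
The plan is to design a descent algorithm on the level-2 tree rooted at $v$ that uses only $O(1)$ stored information per visited node and processes $O(\log n)$ nodes in total. I will distinguish two kinds of sub-queries, which I call \emph{standard} and \emph{special}.

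A standard sub-query $L(v', x', y')$ asks for the clockwise arc length from $x'$ to $y'$ along $\partial\ch(v')$ for arbitrary vertices $x', y'$ of $\ch(v')$. Let $u_1, u_2$ denote the children of $v'$ with $B(u_1)$ above $B(u_2)$. I first determine in $O(1)$ time which child contains each of $x', y'$ (using a leaf pointer carried by each vertex). If both lie in $u_1$, then both are vertices of $\ch(u_1)$ on the top arc of $\ch(u_1)$ appearing on $\ch(v')$, and depending on their cyclic order relative to the bridge endpoints $c_1, d_1$ of $\ch(v')$, the clockwise arc from $x'$ to $y'$ either stays on that top arc (reducing to $L(u_1, x', y')$) or wraps around through $u_2$ (equaling $|\partial\ch(v')| - L(u_1, y', x')$, since the opposite direction stays on top and coincides with its counterpart on $\ch(u_1)$). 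In either case the sub-query reduces to one standard sub-query at a child. If $x'$ and $y'$ lie in different children, the arc crosses a unique bridge of $\ch(v')$ and decomposes as $L(u_1, x', c) + |b| + L(u_2, c', y')$, where $b$ is the crossed bridge with endpoints $c \in \ch(u_1)$ and $c' \in \ch(u_2)$; the two remaining pieces are \emph{special} sub-queries at the children.

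A special sub-query $L(u, x', c)$ has $c$ equal to an endpoint of a bridge of $\ch(p(u))$ lying in $B(u)$, so that chain lengths on $\partial\ch(u)$ from $c$ to each bridge endpoint of $\ch(u)$ are directly stored at $u$. I claim that any special sub-query at $u$ reduces in $O(1)$ time to a single special sub-query at a child of $u$. When $x'$ and $c$ lie in different children of $u$, I split the arc at the unique bridge of $\ch(u)$ separating them: the portion on the $c$-side is precisely one of the stored chains (since $c$ and the corresponding bridge endpoint of $\ch(u)$ both lie on the same child's hull, making the chain on $\partial\ch(u)$ coincide with the arc on that child), while the portion on the $x'$-side becomes a special sub-query at the child containing $x'$, with the corresponding bridge endpoint of $\ch(u)$ as its new special endpoint. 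When $x'$ and $c$ lie in the same child $w_i$, both are on one top-or-bottom arc of $\ch(u)$, and using the stored chain length from $c$ to the appropriate bridge endpoint $c''$ of $\ch(u)$ as a prefix correction, the sub-query again reduces to $L(w_i, x', c'')$ with $c''$ special at $w_i$.

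Combining these two reductions, the algorithm for $L(v, x, y)$ proceeds as follows. I descend from $v$ by standard reductions until reaching the lowest ancestor $w$ at which $x$ and $y$ fall into different children; this descent takes $O(\log n)$ time with $O(1)$ work per level, maintaining a constant-size accumulator for any perimeter corrections introduced by the wrap-around case. At $w$, I spawn the two special sub-queries given by the Case B decomposition; each is resolved in $O(\log n)$ further steps by the one-level-per-step special reduction, terminating when the subtree becomes a single leaf. The total running time is $O(\log n)$. The main obstacle is verifying the arc identifications in each sub-case, namely that a chain stored on $\partial\ch(u)$ between a special endpoint and a bridge endpoint of $\ch(u)$ coincides with the corresponding arc on the relevant child's hull, together with the careful bookkeeping of orientation flips and perimeter accumulators during the initial standard descent.
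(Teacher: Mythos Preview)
Your proposal is correct and follows essentially the same approach as the paper: descend to the node where $x$ and $y$ separate into different children, split the arc at the bridge there, and then resolve each piece by a single root-to-leaf walk that at every level computes an arc length from $x$ (resp.\ $y$) to a bridge endpoint using the stored parent--child bridge-chain lengths. Your ``standard/special'' terminology and the explicit treatment of the initial same-child descent (with perimeter corrections for wrap-around) make the argument more systematic than the paper's, which simply assumes the one-bridge case and defers the rest to ``analogously,'' but the underlying recursion and the stored information used are identical.
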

\begin{proof}
	Let $\gamma$ be the part of the boundary of $\ch(v)$ from $x$ to $y$
	in clockwise order along the boundary of $\ch(v)$.  
	Let $u_1$ and $u_2$ be the children of $v$ such that $B(u_1)$ lies
	above $B(u_2)$.
	We consider the case that only one bridge lies on $\gamma$.  
	We assume further that $x$ is contained in $B(u_2)$, and $y$ is contained in
	$B(u_1)$. The other cases can be handled analogously.
	See Figure~\ref{fig:basic-op}(b).
	Then $\gamma$, excluding
	the bridge $b$ of $\ch(v)$ lying on $\gamma$, consists of two polygonal curves,
	$\gamma_1$ and $\gamma_2$, with $\gamma_1$ lying before $b$ and
	$\gamma_2$ after $b$ along $\gamma$ from $x$.
	
	We show how to compute the length of $\gamma_1$ only. The same method works for
	$\gamma_2$.  To do this, we traverse the
	level-2 tree along the path from the root to the leaf corresponding to $x$ and process nodes as
	follows. For each node $v'$ on the path, 
	our task is computing the length of a polygonal chain of $\ch(v')$ connecting 
	$x$ and an endpoint of a bridge of $\ch(v')$. 
	
	We first consider the case that $\gamma_1$ contains a bridge
	of $\ch(u_2)$.  The chain $\gamma_1$, excluding the bridges of
	$\ch(u_2)$, consists of at most three pieces because there are
	at most two bridges of $\ch(u_2)$. One of the pieces has
	one endpoint on $x$ and the other on an endpoint of a
	bridge of $\ch(u_2)$, and each of the other pieces has 
	one endpoint on an endpoint of a bridge of $\ch(u_2)$ and
	the other on an endpoint of a bridge of 
	$\ch(v)$. Therefore, the lengths of the pieces of
	$\gamma_1$ which are not incident to $x$ are stored in $u_2$.
	Thus, it suffices to compute the piece of $\gamma_1$ with
	endpoints on $x$ and an endpoint of a bridge of
	$\ch(u_2)$. To do this, we visit the child $w$ of $u_2$ such
	that $B(w)$ contains $x$ and 
	compute the length of the piece of $\gamma_1$ recursively.
	
	Consider the case that $\gamma_1$ contains no bridge of $\ch(u_2)$.
	In this case, we find the first endpoint $e$ of a bridge of $\ch(u_2)$
	that appears first along its boundary from $x$ in clockwise order.
	The length of $\gamma_1$ is equal
	to the length of the part $\gamma_{xe}$ of the boundary of $\ch(u_2)$
	from $x$ to $e$ in clockwise order minus the length of the part $\gamma_{ze}$
	of the boundary
	from the endpoint $z$ of $\gamma_1$ other than $x$ to $e$ in clockwise order.
	The length of $\gamma_{ze}$ is stored in $u_2$. Thus it suffices to compute the length of 
	$\gamma_{xe}$. Since $\gamma_{xe}$ connects $x$ and an endpoint, $e$,
	of a bridge of $\ch(u_2)$, we visit the child $w$ of $u_2$ such that
	$B(w)$ contains $x$ and compute the length of $\gamma_{xe}$ recursively.
	
	In this way, we traverse the tree
	along the path from $v$ to a leaf node in $O(\log n)$ time. Finally, we
	obtain the length of $\gamma_1$.
\end{proof}

\subsection{Query Algorithm}\label{sec:query-2d}
Let $Q$ be an axis-parallel rectangle. We present an algorithm for
computing the perimeter of the convex hull of $P\cap Q$ in
$O(\log^2 n)$ time.  We call the part of the convex hull from its
topmost vertex to its rightmost vertex in clockwise order along its
boundary the \emph{urc-hull} of $P\cap Q$.  In the following, we
compute the length of the urc-hull $\gamma$ of $P\cap Q$ in
$O(\log^2 n)$ time.  The lengths of the other parts of the convex hull
of $P\cap Q$ can be computed analogously.

We use the algorithm by Overmars and van Leeuwen~\cite{Overmars-1981}
for computing the outer tangents between any two convex polygons.
\begin{lemma}[\cite{Overmars-1981}]\label{lem:overmars}
  Given any two convex polygons stored in two binary search trees of height $O(\log n)$,
  we can compute the outer tangents between them in $O(\log n)$ time,
  where $n$ is the total complexity of the convex hulls.
\end{lemma}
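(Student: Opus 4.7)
The plan is to perform a coordinated descent in the two input trees, making only $O(1)$-time decisions per level, and to halve the sum of the heights of the two current candidate subtrees at every step. I would compute the two outer tangents one at a time; by symmetry it suffices to describe how I find the upper one.

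First I would preprocess: in each tree locate the leftmost and rightmost vertices in $O(\log n)$ time so as to isolate the upper chain, which forms a contiguous range. The two candidate arcs $A_1$ and $A_2$, on $C_1$ and $C_2$, are each represented by the root of a subtree of height $O(\log n)$, and I would maintain the invariant that the true tangent endpoint on $C_i$ lies in $A_i$.

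I then iterate. Let $m_i$ be the vertex stored at the root of the current subtree of $T_i$, together with its two polygon-neighbors along the chain (accessible in $O(1)$ time via a constant amount of augmentation at each node, such as next/previous pointers along the chain). Draw the probe line $\ell=\overline{m_1 m_2}$ and, for each $i$, classify the local configuration according to which side of $\ell$ the two edges incident to $m_i$ lie on. By the standard supporting-line characterization, if both edges at $m_i$ lie weakly below $\ell$ then $\ell$ supports $C_i$ at $m_i$, making $m_i$ a candidate tangent endpoint on $C_i$; otherwise the slope pattern of the two incident edges dictates unambiguously whether the tangent endpoint on $C_i$ lies in the clockwise half of $A_i$ or in the counterclockwise half. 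Combining the two classifications, I descend to the corresponding child subtree in at least one of the two trees, decreasing the potential $h(T_1')+h(T_2')$ by at least one. After $O(\log n)$ descents both current subtrees shrink to a leaf, and the two leaves are the tangent endpoints.

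The main obstacle is verifying the pruning rule. The case analysis has a constant but non-trivial number of subcases: those where $\ell$ already crosses one or both polygons (so the probe does not separate them), those where the edges at $m_1$ and $m_2$ have differing slope relationships, and the degenerate configurations where an edge is collinear with $\ell$. For each subcase one must exhibit a safe descent direction and argue that the invariant above is preserved. Once this analysis is in hand, the complexity bound is immediate: each iteration does $O(1)$ work, the potential starts at $O(\log n)$ and decreases by at least one per iteration, so the algorithm runs in $O(\log n)$ time in total.
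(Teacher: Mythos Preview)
The paper does not prove this lemma; it simply cites Overmars and van Leeuwen~\cite{Overmars-1981} and uses the result as a black box. Your sketch is precisely the coordinated binary descent that Overmars and van Leeuwen introduced, so in that sense you are reproducing the intended argument rather than offering an alternative.

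One remark on the sketch itself. You write that the slope pattern at $m_i$ ``dictates unambiguously'' which half of $A_i$ contains the tangent endpoint, and that combining the two local classifications always lets you descend in at least one tree. That is true in eight of the nine cases, but in the remaining case---when both $m_1$ and $m_2$ are ``reflex'' with respect to $\ell$, i.e.\ at each $m_i$ the two incident edges go to opposite sides of $\ell$ in the ambiguous configuration---local slope information alone does \emph{not} suffice to choose a safe descent. Overmars and van Leeuwen resolve this case by intersecting the two supporting lines at $m_1$ and $m_2$ and comparing the intersection point to a fixed line separating the two hulls; only with that extra global datum can one argue that at least one half can be discarded. Your write-up never invokes a separating line, and the lemma as stated in the paper is for ``any two convex polygons,'' so you should either (i) make the separation hypothesis explicit (it holds in every invocation the paper actually makes), or (ii) indicate how you handle the doubly-ambiguous case without it. This is exactly the ``main obstacle'' you flag, and it is the one place where the argument is genuinely delicate rather than routine.
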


We compute the set $\mathcal{V}$ of the canonical cells for $Q$ in
$O(\log^2 n)$ time.
Recall that the size of $\mathcal{V}$ is $O(\log^2 n)$.  We consider
the cells of $\mathcal{V}$ as grid cells of a grid with $O(\log n)$
rows and $O(\log n)$ columns.  We use $C(i,j)$ to denote the grid cell
of the $i$th row and $j$th column such that the leftmost cell in the
topmost row is $C(1,1)$. See Figure~\ref{fig:grid}(b). 
Notice that a grid cell $C(i,j)$ might not be contained in
$\mathcal{V}$.

Recall that we want to compute the urc-hull of $P\cap Q$.  To do this,
we compute the point $p_x$ with largest $x$-coordinate and the point
$p_y$ with largest $y$-coordinate from $P\cap Q$ in $O(\log n)$ time
using the range tree~\cite{CGbook}. Then we find the cells of
$\mathcal{V}$ containing each of them in the same time. Let
$C(i_1,j_1)$ and $C(i_2,j_2)$ be the cells of $\mathcal{V}$ containing
$p_y$ and $p_x$, respectively.

We traverse the cells of $\mathcal{V}$ starting from $C(i_1,j_1)$
until we reach $C(i_2,j_2)$ as follows.  We find every cell
$C(i,j)\in\mathcal{V}$ with $i_1\leq i\leq i_2$ and
$j_1\leq j\leq j_2$ such that no cell
$C(i',j')$ with $i<i'$  is in $\mathcal{V}$
or no cell
$C(i',j')$ with $j>j'$ is in $\mathcal{V}$.
There are $O(\log n)$ such cells, and we call them \emph{extreme cells}. 
We can compute all extreme cells in $O(\log^2 n)$ time. 
Note that the urc-hull of $P\cap Q$ 
is the urc-hull of points contained in the extreme cells. 
To compute the urc-hull of $P\cap Q$, we traverse the extreme cells
in the lexicographical order with respect to the first index and then
the second index. 


During the traversal, we maintain the urc-hull of the points contained
in the cells we visited so far using a binary search tree of height
$O(\log n)$. Imagine that we have just visited a cell
$C\in\mathcal{V}$ in the traversal.  Let $\delta_1$ denote the urc-hulls 
of the points contained in the cells we visited before the visit to $C$.
Let $\delta_2$ denote the urc-hulls of the points contained in the cells we visited
so far, including $C$.
Due to the data structure we maintained, we have a binary search tree of height $O(\log n)$
for the convex hull $\ch$ of the points contained in $C$. Moreover, we
have a binary search tree of height $O(\log n)$ for $\delta_1$ from
the traversal to the cells we visited so far.  Therefore, we compute
the outer tangents (bridges) between them in $O(\log n)$ time by
Lemma~\ref{lem:overmars}.  The urc-hull $\delta_2$ is the
concatenation of three polygonal curves: a part of $\ch$, the bridge,
and a part of $\delta_1$. Thus we can represent $\delta_2$ using a
binary search tree of height one plus the maximum of the
heights of the binary search trees for $\ch$ and $\delta_1$.

Since we traverse $O(\log n)$ cells in total, we obtain a binary
search tree of height $O(\log n)$ representing the urc-hull of
$P\cap Q$ after the traversal. The traversal takes $O(\log^2 n)$ time.
Notice that the urc-hull consists of $O(\log n)$ polygonal curves that
are parts from the convex hulls stored in cells of $\mathcal{V}$ and
$O(\log n)$ bridges connecting them. We can compute the length of the
polygonal curve in $O(\log^2 n)$ time in total by
Lemma~\ref{lem:basic-op}.

\begin{theorem}
  Given a set $P$ of $n$ points in the plane, we can construct a data
  structure with $O(n\log n)$ space in $O(n\log n)$-time 
  preprocessing that allows us to compute the perimeter of the convex
  hull of $P\cap Q$ in $O(\log^2 n)$ time 
  for any query axis-parallel rectangle $Q$. 
\end{theorem}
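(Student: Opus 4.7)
The plan is to assemble the theorem directly from the preparation in Section~\ref{sec:DS} and the query procedure in Section~\ref{sec:query-2d}, and to verify that both resource bounds are met.

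For the data structure and preprocessing, I would first build the two-layer grid-like range tree on $P$ using the cited lemma, which gives $O(n\log n)$ space and $O(n\log n)$ construction time. For each node $v$ of each level-2 tree I then attach the constant amount of extra information described earlier: the two bridges of $\ch(v)$, the length of each polygonal chain of $\ch(v)$ between them, and the chain lengths along $\partial\ch(v)$ connecting bridge endpoints of $\ch(v)$ and $\ch(p(v))$ that lie in $B(v)$. As noted in the text, each such quantity at $v$ is computable in time proportional to the number of vertices of a child hull that disappear upon forming $\ch(v)$, and since each point of $P$ contributes to only $O(\log n)$ level-2 trees, the total work for this augmentation sums to $O(n\log n)$, with only $O(1)$ extra space per node.

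For the query on an axis-parallel rectangle $Q$, I would compute the perimeter as the sum of the lengths of the four monotone arcs of $\ch(P\cap Q)$; by symmetry it suffices to describe the urc-hull. I locate the $O(\log^2 n)$ canonical cells for $Q$, find the cells $C(i_1,j_1)$ and $C(i_2,j_2)$ containing the topmost and rightmost points of $P\cap Q$ using the range tree, extract the $O(\log n)$ extreme cells along the staircase between them, and process them in lexicographic order. At each step I merge the convex hull stored at the current extreme cell with the running urc-hull by invoking Lemma~\ref{lem:overmars} on their two BST representations of height $O(\log n)$; this produces one new bridge and a new BST of height only one greater. After the $O(\log n)$ merges, the urc-hull is represented as an alternation of $O(\log n)$ bridges and $O(\log n)$ sub-arcs of stored convex hulls. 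Summing bridge lengths is immediate, and applying Lemma~\ref{lem:basic-op} to each sub-arc yields its length in $O(\log n)$ time, for an $O(\log^2 n)$ total that dominates the query cost.

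The delicate point I expect to be the main obstacle is verifying the invariant that every sub-arc of the final urc-hull lies entirely inside a single stored $\ch(v)$ and has as its endpoints precisely the vertices of $\ch(v)$ produced as successive tangency points, so that Lemma~\ref{lem:basic-op} is applicable. This relies on the grid alignment of canonical cells in the grid-like range tree, together with the staircase processing order, which together guarantee that each new merge trims only a prefix of the running urc-hull and never cuts a previously computed sub-arc in its interior. Once this invariant is established, the size, construction time, and $O(\log^2 n)$ query bound follow directly.
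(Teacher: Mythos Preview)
Your proposal is correct and follows essentially the same approach as the paper: build the two-layer grid-like range tree with the constant-size bridge/length augmentation at each level-2 node, then answer a query by traversing the $O(\log n)$ extreme cells in lexicographic order, merging via Lemma~\ref{lem:overmars}, and summing arc lengths via Lemma~\ref{lem:basic-op}. Your $O(n\log n)$ accounting for the augmentation (each point lies in $O(\log n)$ level-2 trees) is in fact tighter than the $O(n\log^2 n)$ figure the paper's text gives just before the theorem, and your explicit flagging of the invariant that each final sub-arc stays inside a single stored $\ch(v)$ is a point the paper leaves implicit.
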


Since the data structure with its construction and the query
algorithm can be used for any pair of orientations which are not
necessarily orthogonal through an affine transformation, they work
for any pair of orientations with the same space and time
complexities.

\begin{corollary}\label{cor:two-orientations}
  Given a set $P$ of $n$ points and a set $\oset$ of two orientations
  in the plane, we can construct a data
  structure with $O(n\log n)$ space in $O(n\log n)$-time 
  preprocessing that allows us to compute the perimeter of the convex
  hull of $P\cap Q$ in $O(\log^2 n)$ time 
  for any query $\oset$-oriented rectangle $Q$.
\end{corollary}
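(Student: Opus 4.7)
The plan is to reduce $\oset$-oriented rectangle queries to the axis-parallel case via an affine transformation. Given $\oset=\{o_1,o_2\}$ with $o_1$ and $o_2$ linearly independent, I would fix a nonsingular affine map $T$ that sends $o_1$ to the horizontal direction and $o_2$ to the vertical direction. Under $T$, every $\oset$-oriented rectangle becomes an axis-parallel rectangle in the transformed plane, and $T(P)$ is a set of $n$ points to which we can apply the construction of Section~\ref{sec:DS}.

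Correctness of the reduction rests on the fact that convex hulls are preserved combinatorially by affine maps: for any rectangle $B$ in the transformed plane, $\ch(T(P)\cap B)$ is the image under $T$ of $\ch(P\cap T^{-1}(B))$, and the vertices, edges, bridges, and the notion of the urc-hull correspond bijectively through $T$. Thus the bridges and chain decompositions that the query algorithm of Section~\ref{sec:query-2d} computes on $T(P)$ with respect to $T(Q)$ identify precisely the bridges and chains of the convex hull of $P\cap Q$ in the original space.

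The one subtlety, and essentially the only nontrivial point in the proof, is that Euclidean lengths are \emph{not} preserved by a general affine transformation. To accommodate this, when attaching auxiliary information to the nodes of the level-2 trees built on $T(P)$, I would store every polygonal chain length and every bridge length measured in the \emph{original} coordinates of $P$, not in the transformed ones. Each chain or bridge in the transformed space corresponds to a unique chain or bridge in the original space (since $T$ is a bijection that preserves the combinatorial hull structure), so these original-metric quantities are well-defined and can be computed within the same $O(n\log n)$ preprocessing budget, the affine map and its inverse being applicable in $O(n)$ and $O(1)$ time respectively.

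With this modification, a query proceeds by applying $T$ to $Q$ to obtain an axis-parallel rectangle and running the algorithm of Section~\ref{sec:query-2d} unchanged, except that each stored length retrieved during the walk over canonical cells and each bridge length computed via Lemma~\ref{lem:basic-op} is already in the original metric. Summing these yields exactly the perimeter of the convex hull of $P\cap Q$ in $O(\log^2 n)$ time, and the space and construction bounds carry over directly from the theorem. The main obstacle is solely this length-distortion point; once lengths are stored in the original metric, the rest of the argument is verbatim that of the preceding theorem.
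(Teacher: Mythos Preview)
Your proposal is correct and follows the same affine-transformation approach the paper uses (the paper justifies the corollary in a single sentence preceding it). You are in fact more careful than the paper: you explicitly address the length-distortion issue by storing chain and bridge lengths in the original metric, a subtlety the paper leaves implicit.
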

\section{\texorpdfstring{$\oset$}{O}-oriented Triangle Queries for Convex
  Hulls}\label{sec:tri}
In this section, we are given a set $P$ of $n$ points and a set
$\oset$ of $k$ distinct orientations  in the plane.
We preprocess the two sets so that we
can compute the perimeter of $P\cap Q$ for any query $\oset$-oriented
triangle $Q$ in the
plane efficiently. 
We construct a \emph{three-layer grid-like range tree} on $P$ with
respect to every 3-tuple $(o_1,o_2,o_3)$ of the orientations in
$\oset$, which is a generalization of the two-layer grid-like range
tree described in Section~\ref{sec:DS}.
A straightforward query algorithm takes $O(\log^3 n)$ time since there are $O(\log^2 n)$ canonical cells for a
query $\{o_1,o_2,o_3\}$-oriented triangle $Q$. 
However, it is unclear how to obtain a faster query algorithm as
 the query algorithm described in Section~\ref{sec:rect} does
not generalize to this problem directly.  A main reason is that a
canonical cell for any query $\{o_1,o_2,o_3\}$-oriented triangle is a
$\{o_1,o_2,o_3\}$-oriented polygon, not a parallelogram.
This makes it unclear how to apply the approach in Section~\ref{sec:rect} to this
case.

In this section, we present an $O(\log^2n)$-time query algorithm for this problem.
Our algorithm improves this straightforward algorithm by a factor of $\log n$.
To do this, we classify canonical cells for $Q$ into two types.
We can handle the cells of the first type as we do in Section~\ref{sec:rect}
and compute the convex hull of the points of $P$ contained in them.
Then we handle the cells of the second type by defining a specific ordering to these cells
so that we can compute the convex hull of the points of $P$ contained
in them efficiently.
Then we merge the two convex hulls to obtain the convex hull of $P\cap Q$.

\subsection{Data Structure}
We construct a \emph{three-layer grid-like range tree} on $P$
with respect to every 3-tuple of the orientations in $\oset$.  Let
$(o_1,o_2,o_3)$ be a 3-tuple of the orientations in $\oset$.  For an
index $i=1,2,3$, we call the projection of a point in the plane 
onto a line orthogonal to $o_i$ the \emph{$o_i$-projection} of the
point.  Let $T_i$ be a balanced binary search tree on the
$o_i$-projections of the points of $P$ for $i=1,2,3$.

\paragraph{Three-layer Grid-like Range Tree.}
The level-1 tree of the grid-like range tree is $T_1$. Each node of
$T_1$ corresponds to a slab of orientation $o_1$. For each node of the level-1 tree, we construct
a contracted tree of the $o_2$-projections of the points contained
in the slab. 
A node of a level-$2$ tree corresponds to an $\{o_1,o_2\}$-oriented parallelogram. For
each node of a level-2 tree, we construct a contracted tree of the
$o_3$-projections of the points contained in the $\{o_1,o_2\}$-oriented
parallelogram. 
A node $v$ of a level-3 tree corresponds to an $\otuple$-oriented polygon 
$B(v)$ with at most six vertices.
See Figure~\ref{fig:partition} for an illustration.
\begin{figure}
  \begin{center}
    \includegraphics[width=.8\textwidth]{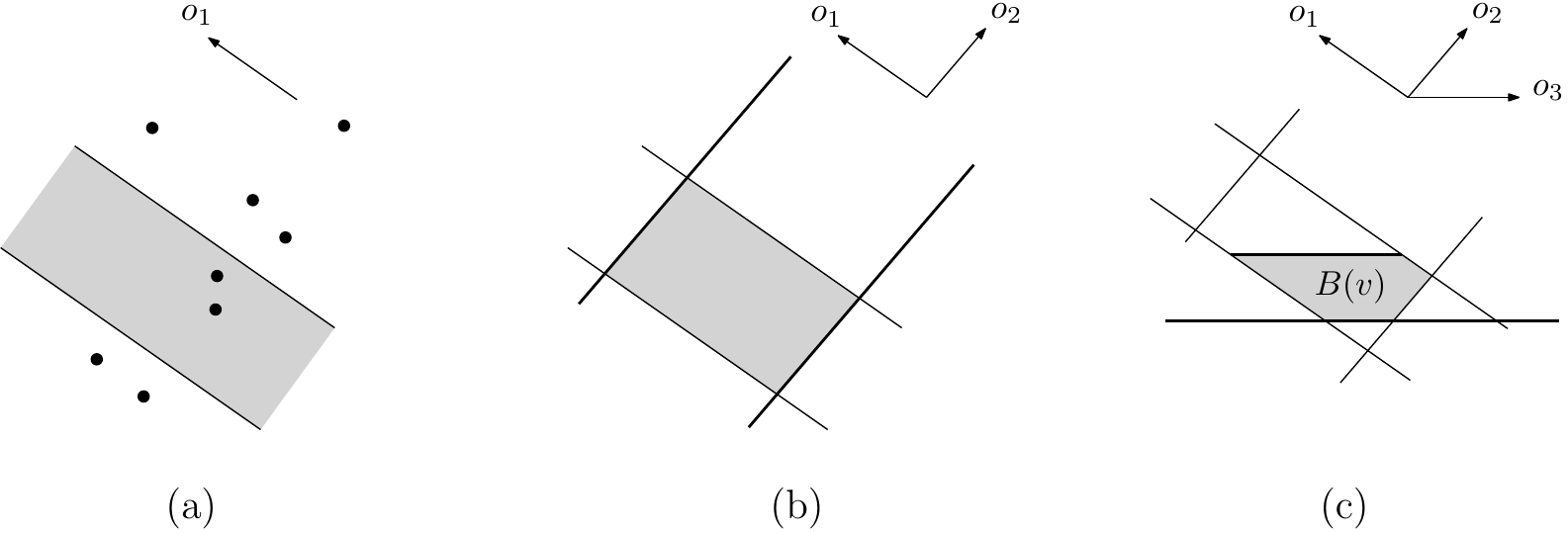}
    \caption{\small\label{fig:partition} (a) A node
    	of $T_1$ corresponds to a slab of orientation $o_1$. 
    	(b) A node of a level-2 tree corresponds to a parallelogram having  
    	two sides of orientation $o_1$ and two sides of orientation $o_2$. 
    	(c) A node $v$ of a level-3 tree corresponds to an $\{o_1,o_2,o_3\}$-polygon $B(v)$.}
  \end{center}
\end{figure}

\paragraph{Information Stored in a Node of a Level-3 Tree.}
Without loss of generality, we assume $o_3$ is parallel to the $x$-axis.
To compute the perimeter of the convex hull of $P\cap Q$ for a query $\oset$-oriented
triangle $Q$, we store additional information for each node $v$ of a \mbox{level-3} 
tree as follows.
The node $v$ has two children $u_1$ and $u_2$ in the level-3 tree
that $v$ belongs to such that $B(u_1)$ lies above $B(u_2)$.
By construction, $B(v)$ is partitioned into $B(u_1)$ and $B(u_2)$.
See Figure~\ref{fig:information} for an illustration.
\begin{figure}
  \begin{center}
    \includegraphics[width=.6\textwidth]{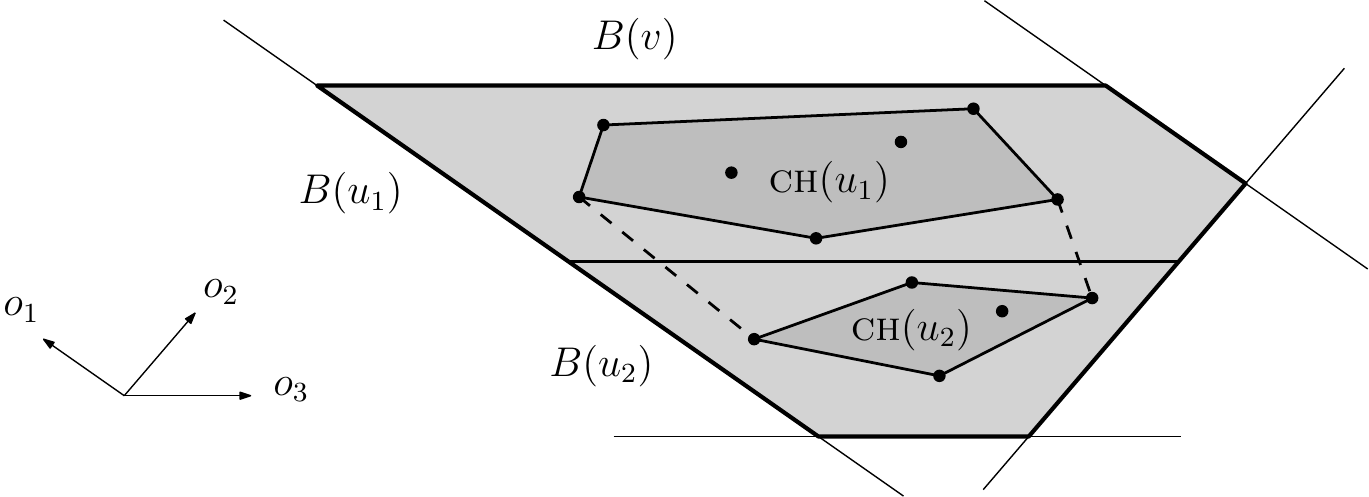}
    \caption{\small\label{fig:information} 
A node $v$ of a level-3 tree has two children $u_1$ and $u_2$ such that $B(v)$ is partitioned into $B(u_1)$ and $B(u_2)$
with $B(u_1)$ lying above $B(u_2)$. There are at most two bridges of $\textsc{ch}(v)$, each has one endpoint on $\textsc{ch}(u_1)$ and the other on $\textsc{ch}(u_2)$.}
  \end{center}
\end{figure}

Consider the convex hull $\ch(v)$ of $P\cap B(v)$ and the convex hull
$\ch(u_i)$ of $P\cap B(u_i)$ for $i=1,2$.  There are at most two edges
of $\ch(v)$ that appear on neither $\ch(u_1)$ nor $\ch(u_2)$.  We call
such an edge a \emph{bridge} of $\ch(v)$ with respect to $\ch(u_1)$
and $\ch(u_2)$, or simply a bridge of $\ch(v)$.
Note that a bridge of
$\ch(v)$ has one endpoint on $\ch(u_1)$ and the other endpoint on
$\ch(u_2)$.
As we do in Section~\ref{sec:rect}, for each node $v$ of the level-3
trees, we store two bridges of $\ch(v)$ and the length of each polygonal chain
of $\ch(v)$ lying between the two bridges.  Also, we store the length of
each polygonal chain connecting an endpoint of a bridge of $\ch(v)$
and an endpoint of a bridge of $\ch(p(v))$ for the parent $p(v)$ of
$v$ along the boundary of $\ch(v)$ if the two endpoints appear on $\ch(v)$.
The following lemma can be proven in a way similar to Lemma~\ref{lem:basic-op}.

\begin{lemma}\label{lem:basic-op-3d}
	Given a node $v$ of a level-3 tree and two vertices $x, y$ of
	$\ch(v)$, we can compute the length of the part of the boundary of
	$\ch(v)$ from $x$ to $y$ in clockwise order along the boundary of
	$\ch(v)$ in $O(\log n)$ time.
\end{lemma}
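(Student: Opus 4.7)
The plan is to mirror the proof of Lemma~\ref{lem:basic-op} essentially verbatim, since the combinatorial skeleton of that argument depends only on the recursive structure of the tree and on the bound of two bridges per parent-child split, both of which carry over unchanged to the level-3 tree. Concretely, I would let $\gamma$ be the part of $\partial \ch(v)$ from $x$ to $y$ in clockwise order, let $u_1$ and $u_2$ be the level-3 children of $v$ with $B(u_1)$ above $B(u_2)$ (using the WLOG assumption $o_3 \parallel x$-axis), and split the analysis according to how many of the at most two bridges of $\ch(v)$ lie on $\gamma$. The representative case is when exactly one bridge $b$ lies on $\gamma$ and $x$ lies on $\ch(u_2)$ while $y$ lies on $\ch(u_1)$; removing $b$ decomposes $\gamma$ into two polygonal curves $\gamma_1 \subseteq \ch(u_2)$ and $\gamma_2 \subseteq \ch(u_1)$, and the length of $b$ itself is known from the stored endpoints.

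I would compute the length of $\gamma_1$ by walking the level-3 tree from $v$ down the path to the leaf for $x$. At each visited node $v'$ the subproblem is: return the length of a chain of $\partial \ch(v')$ from $x$ to a specified endpoint of a bridge of $\ch(v')$. Let $u'_2$ be the child of $v'$ containing $x$. If this chain contains a bridge of $\ch(u'_2)$, it decomposes into at most three pieces at those bridges; the pieces whose endpoints lie on bridges of $\ch(v')$ or of $\ch(u'_2)$ but not on $x$ have lengths stored at $u'_2$ (as chain-lengths between endpoints of bridges of $\ch(v')$ and endpoints of bridges of $\ch(u'_2)$, or between the two endpoints of the bridges of $\ch(u'_2)$), and the piece incident to $x$ is handled by recursing into the child of $u'_2$ containing $x$. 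If the chain contains no bridge of $\ch(u'_2)$, it equals a stored boundary chain of $\ch(u'_2)$ minus a trailing piece that again reduces to the same subproblem on the child of $u'_2$ containing $x$. This yields $O(1)$ work per node and a single recursion into the child containing $x$; the procedure for $\gamma_2$ is symmetric. The zero-bridge and two-bridge subcases of $\gamma$ are assembled from the same primitive: zero bridges reduces directly to a single recursion in the child containing both $x$ and $y$, while two bridges is the sum of two one-endpoint-chain computations plus the stored between-bridges chain length of $\ch(v)$.

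The main obstacle to watch for is the geometry of $B(v)$: unlike the axis-parallel rectangles of Lemma~\ref{lem:basic-op}, here $B(v)$ is an $\otuple$-oriented polygon of up to six vertices, so one must verify that the bridge analysis is not disrupted. However, because the level-3 split of $B(v)$ into $B(u_1)$ and $B(u_2)$ is a purely horizontal split by $o_3$-projection, the relevant invariants — that at most two edges of $\ch(v)$ fail to appear on $\ch(u_1) \cup \ch(u_2)$, that each such bridge has one endpoint on each child hull, and that the stored chain lengths between consecutive bridges of $\ch(v)$ and between bridges of $\ch(v)$ and $\ch(p(v))$ cover all the pieces arising in the recursion — hold for exactly the same reason as in the rectangle case. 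Since the path traversed in the level-3 tree has length $O(\log n)$ and each node contributes $O(1)$ work, the total time is $O(\log n)$.
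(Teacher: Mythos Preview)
Your proposal is correct and follows exactly the approach the paper intends: the paper does not give a separate proof of Lemma~\ref{lem:basic-op-3d} but simply states that it ``can be proven in a way similar to Lemma~\ref{lem:basic-op}'', and your write-up is precisely that adaptation, with the additional (and correct) observation that the hexagonal shape of $B(v)$ is irrelevant because only the two-bridge invariant of the horizontal split matters.
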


\subsection{Query Algorithm} 
In this subsection, we present an $O(\log^2 n)$-time query algorithm
for computing the perimeter of the convex hull of $P\cap Q$ for a query $\otuple$-oriented
triangle $Q$. 
Let $T$ be the three-layer grid-like range tree constructed
with respect to $(o_1,o_2,o_3)$.

\subsubsection{Computing Canonical Cells}
We obtain $O(\log^2 n)$ cells of $T$, called \emph{canonical cells} of $Q$, such that the union of
$P\cap C$ coincides with $P\cap Q$ for all the canonical cells $C$ as follows.
We first search the level-1 tree of $T$ along the endpoints of the $o_1$-projection of $Q$.
Then we obtain $O(\log n)$ nodes such that the union of the slabs corresponding to the nodes
contains $Q$. Then we search the level-2 tree associated with each such node along
the endpoints of the $o_2$-projection of $Q$.
Then we obtain $O(\log^2 n)$ nodes in total such that the union of the $\{o_1,o_2\}$-parallelograms corresponding
to the nodes contains $Q$. We discard all $\{o_1,o_2\}$-parallelograms not intersecting $Q$. 
Some of the remaining $\{o_1,o_2\}$-parallelograms are contained in $Q$, but the others
intersect the boundary of $Q$ in their interiors. For the nodes corresponding to 
the $\{o_1,o_2\}$-parallelograms intersecting the boundary of $Q$, we search their
level-3 trees along the $o_3$-projection of $Q$. 

As a result, we obtain 
$\{o_1,o_2\}$-parallelograms from the level-2 trees and $\otuple$-polygons from the level-3 trees
of size $O(\log^2 n)$ in total. See Figure~\ref{fig:canonical-cell-3d}. 
We call them the \emph{canonical cells} of $Q$ and denote the set
of them by $\mathcal{V}$. 
Also, we use $\mathcal{V}_p$ and $\mathcal{V}_h$ to denote the subsets of $\mathcal{V}$ consisting of
$\{o_1,o_2\}$-parallelograms from the level-2 trees and $\otuple$-polygons from the level-3 trees, respectively.
We can compute them in $O(\log^2 n)$ time.

\begin{figure}
  \begin{center}
    \includegraphics[width=.8\textwidth]{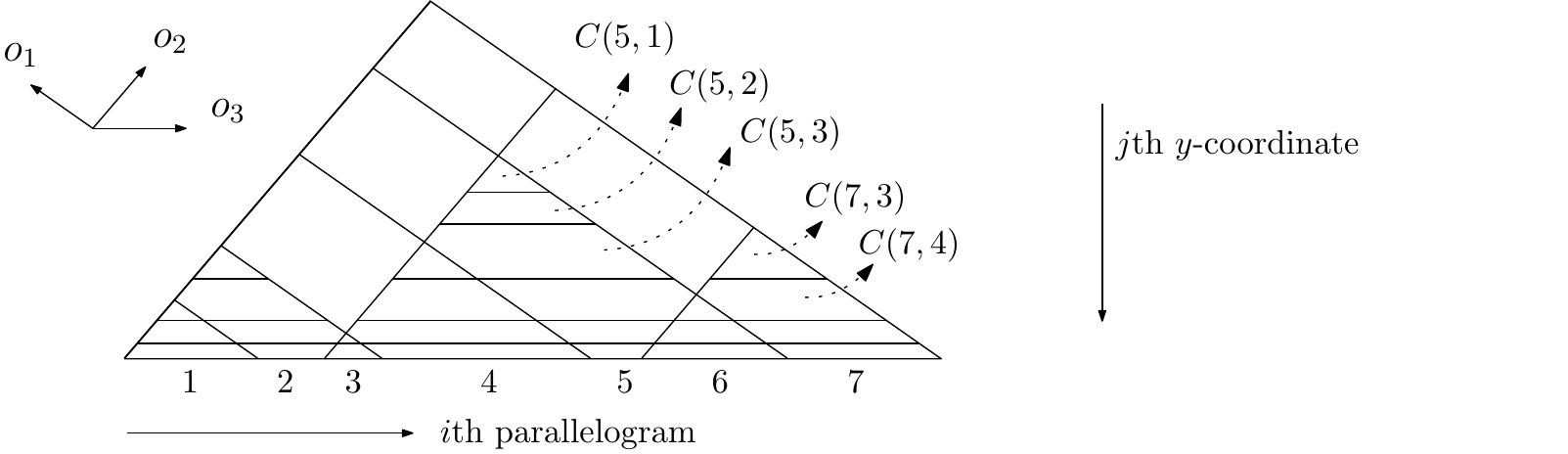}
    \caption{\small\label{fig:canonical-cell-3d} Canonical cells for a
      triangle.  Four $\{o_1,o_2\}$-oriented parallelogram cells from
      level-2 trees and 26 $\{o_1,o_2,o_3\}$-oriented
        polygon cells from level-3 trees.}
  \end{center}
\end{figure}


\subsubsection{Computing Convex Hulls for Each Subset}
We first compute the convex hull $\ch_p$ of the points contained in the cells of
 $\mathcal{V}_p$ and the convex hull $\ch_h$ of the points contained in the
 cells of $\mathcal{V}_h$. Then we merge them into the convex hull of $P\cap Q$ in Section~\ref{sec:merge}.
We can compute $\ch_p$ 
in $O(\log^2 n)$ time due to Corollary~\ref{cor:two-orientations}. This is because
the cells are aligned with respect to two axes which are parallel to $o_1$ and $o_2$ each.
Then we obtain a binary search tree of height $O(\log n)$ representing $\ch_p$.
Thus in the following, we focus on compute $\ch_h$.

Without loss of generality, assume that $Q$ lies above the
$x$-axis. Let $\ell$ be the side of $Q$ of orientation $o_3$.  We
assign a pair of indices to each cell of $\mathcal{V}_h$, which
consists of a \emph{row index} and a \emph{column index} as follows.
The cells of $\mathcal{V}_h$ come from $O(\log n)$ level-3 trees 
of the range tree. This means that each cell of $\mathcal{V}_h$ is contained
in the cell corresponding to the root of one of such level-3 trees.
These root cells are pairwise interior disjoint and intersect $\ell$.
For each cell $v$ of $\mathcal{V}_h$ contained in the $i$th leftmost root cell along $\ell$,
we assign $i$ to it as the row index of $v$.
The bottom side of a cell of $\mathcal{V}_h$ 
is parallel to the $x$-axis. Consider the $y$-coordinates of all bottom sides of the cells
of $\mathcal{V}_h$.  By construction, there are $O(\log n)$ distinct
$y$-coordinates although the size of $\mathcal{V}_h$ is $O(\log^2 n)$.
We assign an index $j$ to the cells of $\mathcal{V}_h$ whose bottom side
has the $j$th largest $y$-coordinates as their column indices.
Then each cell of $\mathcal{V}_h$ has an index $(i,j)$, where $i$ is its
row index and $j$ is its column index. Any two distinct cells of
$\mathcal{V}_h$ have distinct indices.  We let $C(i,j)$ be the cell of
$\mathcal{V}_h$ with index $(i,j)$.

Due to the indices we assigned, we can apply a procedure similar to
Graham's scan algorithm for computing $\ch_h$.  We show how to compute
the urc-hull of $\ch_h$ only. The other parts of the boundary of
$\ch_h$ can be computed analogously.
To do this, we choose $O(\log n)$ cells as follows.
Note that a cell of $\mathcal{V}_h$ is a polygon with at most $6$ vertices.
A trapezoid cell
$C(i,j)$ of $\mathcal{V}_h$ is called an \emph{extreme cell}
if there is no cell
$C(i',j')\in\mathcal{V}_h$ such that $i<i'$ and $j>j'$, or
$i<i'$ and $j<j'$. Here, we need the disjunction. Otherwise, we cannot find 
some trapezoidal cell containing a vertex of the urc-hull. See Figure~\ref{fig:canonical-cell-ch}.
There are $O(\log n)$ extreme cells of $\mathcal{V}_h$.
In addition to these extreme cells, we choose every cell of $\mathcal{V}_h$ which are
not trapezoids, that is, convex $t$-gons with $t=3, 5, 6$. 
Note that there are $O(\log n)$ such cells because
such cells are incident to the corners of the cells of $\mathcal{V}_p$.
In this way, we choose $O(\log n)$ cells of $\mathcal{V}_h$ in total.
\begin{figure}
  \begin{center}
    \includegraphics[width=.8\textwidth]{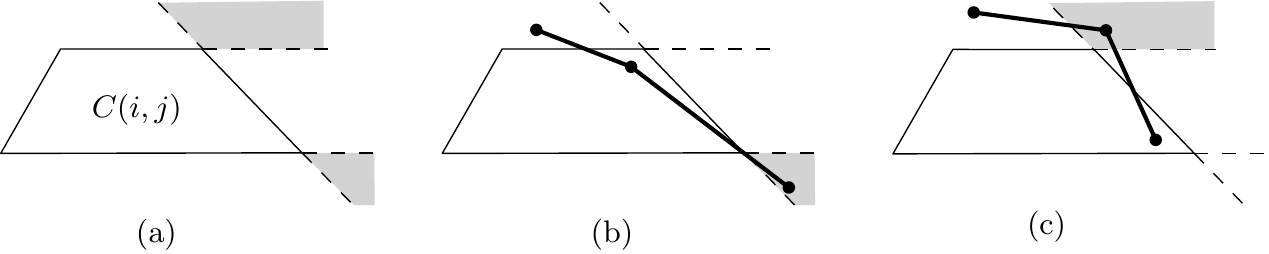}
    \caption{\small\label{fig:canonical-cell-ch}
    	(a) We choose $C(i,j)$ if and only if at least one of the two gray regions
    	contains no cell of $\mathcal{V}_h$.     	
    	(b) It is not sufficient to choose
        the only cells such that no cell of $\mathcal{V}_h$ is 
        contained in their lower gray regions 
        since the urc-hull might have its vertices in a cell whose lower gray regions
        contains a cell of $\mathcal{V}_h$.
        (c) Similarly, it is not sufficient to choose
        the only cells such that no cell of $\mathcal{V}_h$ is contained in
        their upper gray regions.}
  \end{center}
\end{figure}

\begin{lemma}\label{lem:correctness}
	A cell of $\mathcal{V}_h$ containing a vertex of the urc-hull of $\ch_h$ is 
	an extreme cell of $\mathcal{V}_h$ if it is a trapezoid.
\end{lemma}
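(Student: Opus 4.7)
The plan is to prove the contrapositive: if $C(i,j)$ is a trapezoidal cell of $\mathcal{V}_h$ that is not extreme, then no vertex of the urc-hull of $\ch_h$ lies in $C(i,j)$. Negating the ``or''-condition in the definition of extremeness supplies two witness cells in $\mathcal{V}_h$: some $C(i_1,j_1)$ with $i_1>i$ and $j_1<j$ (lying in a root cell strictly to the right of $C(i,j)$'s root cell along $\ell$, with strictly larger bottom $y$-coordinate than that of $C(i,j)$) and some $C(i_2,j_2)$ with $i_2>i$ and $j_2>j$ (also to the right along $\ell$, but with strictly smaller bottom $y$-coordinate). Because the contracted level-$3$ trees retain only nodes whose associated subsets of $P$ are non-empty, I can fix witness points $p_1\in P\cap C(i_1,j_1)$ and $p_2\in P\cap C(i_2,j_2)$.

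My next step is to show that every point $p\in P\cap C(i,j)$ is Pareto-dominated in the upper-right quadrant by some point of $\ch_h$, which immediately implies $p$ is not a urc-hull vertex. The aim is to argue that the chord $\overline{p_1 p_2}$ of $\ch_h$ lies entirely above and to the right of $C(i,j)$: the stacking of level-$3$ cells within each root cell by $y$-coordinate, together with the index relations $j_1<j<j_2$, places $p_1$ above and $p_2$ below $C(i,j)$, while the left-to-right ordering of root cells along $\ell$ places both $p_1$ and $p_2$ strictly to the right of $C(i,j)$. Consequently, for every $p\in C(i,j)$ there is a point of $\overline{p_1 p_2}$ with both coordinates weakly greater than and at least one strictly greater than those of $p$; convexity of $\ch_h$ then yields an actual vertex of $\ch_h$ that upper-right-dominates $p$. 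The trapezoidal shape of $C(i,j)$---with horizontal top and bottom sides, since $o_3$ is the $x$-axis---is what keeps the $y$-coordinate comparisons with the bottoms of $C(i_1,j_1)$ and $C(i_2,j_2)$ clean.

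The main obstacle I expect is reconciling the index-based order with genuine Cartesian dominance: the row index encodes position along $\ell$, but the root cells are slanted $\{o_1,o_2\}$-parallelograms, so ``right along $\ell$'' has to be upgraded to ``right in $x$'' at every height above $\ell$ that $C(i,j)$ spans. The intended resolution is the parallelogram tiling structure: consecutive root cells share an entire $\{o_1,o_2\}$-oriented side, so their $x$-projections at each height remain strictly ordered, and this extends to non-adjacent root cells by transitivity. A related subtlety, pictured in Figures~\ref{fig:canonical-cell-ch}(b) and~(c), is that neither witness alone suffices---a cell possessing only an upper-right witness or only a lower-right witness can still contain a urc-hull vertex---so both witnesses are genuinely needed, which is precisely what the failure of the disjunction in the definition of ``extreme'' provides.
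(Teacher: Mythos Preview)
Your approach is essentially the paper's, stated as the contrapositive. The paper argues directly: it introduces the half-plane $H$ to the right of the line $L$ carrying the right side of $C$, splits $H$ by the horizontal lines through the top and bottom of $C$, and asserts that a urc-hull vertex inside $C$ forces one of the two outer pieces of $H$ to be empty of points; it then identifies emptiness of those pieces with the two index conditions defining extremeness. Your witnesses $p_1,p_2$ are exactly points in those two pieces, and your chord argument is the (spelled-out) reason the paper's assertion holds.

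One correction to your resolution of the obstacle you flag. Consecutive root cells need \emph{not} share an entire side: they are canonical parallelograms coming from different nodes of the level-1 and level-2 trees and may well be non-adjacent, so the step ``$p_1$ and $p_2$ lie strictly to the right of every point of $C$ in $x$-coordinate'' does not follow from the tiling picture you describe. The cleaner fix---and the one implicit in the paper's choice of $H$---is to separate with the slanted line $L$ through the right side of $C$ rather than with the $y$-axis. Any cell with row index $i'>i$ sits in a root parallelogram that meets $\ell$ to the right of root cell $i$, hence lies entirely on the far side of $L$; thus $p_1$ and $p_2$ are both right of $L$, the whole chord $\overline{p_1p_2}$ stays right of $L$, and its intersection $q$ with the horizontal line $y=v.y$ satisfies $q.x\ge v.x$ (since $v\in C$ lies on the near side of $L$ at that same height). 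Together with $p_1.y>v.y$, this shows $v$ is not extreme in any first-quadrant direction, completing your contrapositive.
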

\begin{proof}
	Let $v$ be a vertex of the urc-hull of $\ch_h$ and $C=C(i,j)$ be the trapezoid cell
	of $\mathcal{V}_h$ containing $v$.
	Consider the region $H$ lying to the right of the line containing the right side of $C$.
	The lines containing the top and bottom sides of $C$ subdivide $H$ into three subregions.
	Since $v$ is a vertex of the urc-hull, the topmost or bottommost subregion
	contains no point of $P\cap Q$, that is, there is no cell $C(i',j')\in\mathcal{V}_h$
	such that $i<i'$ and $j>j'$, or 
	$i<i'$ and $j<j'$. Therefore, $C$ is an extreme cell.
\end{proof}

By Lemma~\ref{lem:correctness}, the convex hull $\ch_h$ coincides with the convex hull of the convex hulls
of points in the cells chosen by the previous procedure. 
For each column $j$, we consider the cells with column index $j$ chosen by the previous procedure
one by one in increasing order with respect to their row indices, and compute the convex hull of points
contained in those cells. Then we consider the column indices one by one in increasing order, and compute
the convex hull of the convex hulls for column indices.
This takes $O(\log^2 n)$ time in total as we do in Section~\ref{sec:query-2d}.

In this way, we can obtain
a binary search tree of height $O(\log n)$ representing the urc-hull
of $\ch_h$.  The urc-hull consists of $O(\log n)$ polygonal
curves that are parts of the boundaries of the convex hulls stored in cells of
$\mathcal{V}_h$ and $O(\log n)$ bridges connecting them.  Therefore,
we can compute the lengths of the polygonal curves in $O(\log^2 n)$ time in total.  


\subsubsection{Merging the Two Convex Hulls}\label{sec:merge}
The convex hull $\ch$ of $P\cap Q$ coincides with the convex hull of $\ch_p$
and $\ch_h$. To compute it, we need the following lemma.

\begin{lemma}\label{lem:intersection}
  The boundary of $\ch_p$ intersects the boundary of $\ch_h$ at most
  $O(\log n)$ times. We can compute the intersection points in $O(\log^2 n)$ time in total.
\end{lemma}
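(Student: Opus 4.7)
My plan is to decompose each hull's boundary into \emph{arcs} (subchains lying inside a single canonical cell) and \emph{bridges} (line segments joining arcs across cells), and then exploit convexity together with the disjointness of the cells to cap the number of crossings.

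Recall that $\ch_p$ was built by the two-orientation algorithm of Corollary~\ref{cor:two-orientations}, and $\ch_h$ was built by the Graham-scan-like extreme-cell sweep described just before this lemma. Both constructions process $O(\log n)$ canonical cells in order and merge each new cell's convex hull into a running hull via a single outer tangent. Consequently $\partial \ch_p$ decomposes into $O(\log n)$ arcs, each a subpolyline of $\partial(\text{conv}(P\cap C))$ for some cell $C\in \mathcal{V}_p$, together with $O(\log n)$ straight bridges joining consecutive arcs; an analogous decomposition holds for $\partial \ch_h$ using cells of $\mathcal{V}_h$. Since each canonical cell $C$ is convex, $\text{conv}(P\cap C)\subseteq C$, so every arc of $\ch_p$ sits inside some $\mathcal{V}_p$-cell and every arc of $\ch_h$ inside some $\mathcal{V}_h$-cell. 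The cells of $\mathcal{V}_p \cup \mathcal{V}_h$ have pairwise disjoint interiors, so no arc of $\ch_p$ can meet any arc of $\ch_h$.

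Every point of $\partial \ch_p \cap \partial \ch_h$ therefore lies on a bridge of at least one of the two hulls. Because $\ch_p$ is convex, any line segment crosses $\partial \ch_p$ in at most two points; summing over the $O(\log n)$ bridges of $\ch_h$ bounds the crossings that lie on those bridges by $O(\log n)$, and a symmetric bound holds for the bridges of $\ch_p$. Adding these two counts gives $|\partial \ch_p \cap \partial \ch_h| = O(\log n)$.

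For the time bound, both $\ch_p$ and $\ch_h$ are already stored in binary search trees of height $O(\log n)$, and their $O(\log n)$ bridges are explicitly available from the construction. For each bridge, I locate its at-most-two crossings with the opposing convex polygon by a binary descent on that polygon's BST in $O(\log n)$ time (the same flavor of descent used in Lemma~\ref{lem:overmars}). Summing over the $O(\log n)$ bridges from both hulls gives $O(\log^2 n)$ total time. The main subtlety I expect is making the ``arcs lie in disjoint cells'' step airtight when a point of $P$ sits on a shared cell boundary; this is handled by the canonical tie-breaking built into the grid-like range tree, which assigns every point of $P$ to exactly one cell so that arcs from different cells cannot share a vertex.
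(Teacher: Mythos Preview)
Your argument is correct and essentially identical to the paper's own proof: decompose each hull boundary into $O(\log n)$ arcs lying in pairwise interior-disjoint cells and $O(\log n)$ bridges, observe that arc--arc crossings are impossible so every intersection lies on some bridge, and bound bridge crossings by convexity; the computation is likewise done by intersecting each bridge with the other hull via an $O(\log n)$ binary search. The paper's version is terser but makes precisely the same points.
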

\begin{proof}
	Consider two edges, one from $\ch_p$ and one from $\ch_h$, intersecting each other.
	One of them is a bridge with endpoints lying on two distinct cells of $\mathcal{V}$.
	This is because the cells of $\mathcal{V}$ are pairwise interior disjoint.
	Moreover, each bridge in $\ch_p$ (or $\ch_h$) intersects the boundary of $\ch_h$ (or $\ch_p$) at most twice since $\ch_h$ and $\ch_p$ are convex. 
	Since there are $O(\log n)$ bridges in $\ch_p$ and $\ch_h$,
	there are $O(\log n)$ intersection points between the boundary of $\ch_p$ and the boundary of $\ch_h$.
	
	To compute the intersection points, we compute the intersection points between each bridge of a convex hull  
	and the boundary of the other convex hull. For each bridge, we can compute the two 
	intersection points in $O(\log n)$ time since we have a binary search tree for each convex hull of height $O(\log n)$~\cite{Overmars-1981}. Therefore, we can compute all intersection points in $O(\log^2 n)$ time.
\end{proof}

We first compute the intersection points of the boundaries of $\ch_p$ and $\ch_h$
in $O(\log^2 n)$ time by Lemma~\ref{lem:intersection}, 
and then sort them along the boundary of their convex hull in clockwise order
in $O(\log n\log\log n)$ time.
Note that this order is the same as the clockwise order along the boundary
of $\ch_p$ (and $\ch_h$).
Then we locate each intersection point on the boundary of each convex hull with respect to the bridges
 in $O(\log n)$ time in total.

There are $O(\log n)$ edges of the convex hull $\ch$ of $\ch_p$ and $\ch_h$ 
that do not appear on the boundaries of $\ch_p$ and $\ch_h$. To distinguish them with the bridges on the boundaries
of $\ch_p$ and $\ch_h$, we call the edges on the boundary of $\ch$ appearing neither $\ch_p$
nor $\ch_h$ the \emph{hull-bridges}. Also we call the  
bridges on $\ch_p$ and $\ch_h$ with endpoints in two distinct cells of $\mathcal{V}$ the \emph{node-bridges}.

The boundary of the convex hull of $\ch_p$ and $\ch_h$ consists of $O(\log n)$ hull-bridges and $O(\log n)$ 
polygonal curves each of which connects two hull-bridges along $\ch_p$ or $\ch_h$. 
We compute all hull-bridges in $O(\log^2 n)$ time.

\begin{lemma}\label{lem:hull-bridges}
	All hull-bridges can be computed in $O(\log^2 n)$ time in total.
\end{lemma}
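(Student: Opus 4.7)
The plan is to compute each hull-bridge as an outer common tangent between a pair of arcs---one on $\ch_p$ and one on $\ch_h$---using Lemma~\ref{lem:overmars} in $O(\log n)$ time per tangent. Since there are $O(\log n)$ hull-bridges, this yields the claimed $O(\log^2 n)$ bound.

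First I would use the $O(\log n)$ intersection points between the boundaries of $\ch_p$ and $\ch_h$, already sorted in clockwise order and located on each boundary from Lemma~\ref{lem:intersection}, to partition the two boundaries into $O(\log n)$ arcs each. For each arc I would classify it as lying \emph{outside} or \emph{inside} the other hull, by testing a sample vertex (say an endpoint of a node-bridge internal to the arc, or a midpoint) against the other hull's binary search tree in $O(\log n)$ time. This classification totals $O(\log^2 n)$. The boundary of $\ch$ then consists precisely of the outside arcs, joined in clockwise order either at an intersection point (no new edge) or by a hull-bridge.

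Next I would traverse the cyclic sequence of outside arcs in the order dictated by the sorted intersection points. At each adjacency that is not an intersection point, I would invoke Lemma~\ref{lem:overmars} on the two neighboring outside arcs to compute their outer common tangent, which is the desired hull-bridge. Each invocation runs in $O(\log n)$ time on the binary search trees restricted to the relevant arcs, and there are $O(\log n)$ such invocations, so this step also runs in $O(\log^2 n)$ time.

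The main obstacle I anticipate is applying Lemma~\ref{lem:overmars} to \emph{arcs} rather than to complete convex hulls. Since each arc corresponds to a contiguous range of leaves in a binary search tree of height $O(\log n)$, the tree restricted to the arc still supports the split and predecessor queries that Overmars and van Leeuwen's algorithm relies on, so the $O(\log n)$-per-tangent bound is preserved. A secondary care point is to avoid spurious hull-bridges at intersection points, which is handled simply by skipping adjacencies whose endpoints are intersection points; consistency of the resulting boundary then follows from the convexity of $\ch_p$ and $\ch_h$.
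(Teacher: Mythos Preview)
Your overall plan---partition $\partial\ch_p$ and $\partial\ch_h$ at the $O(\log n)$ intersection points and compute $O(\log n)$ outer tangents via Lemma~\ref{lem:overmars}---is the paper's approach, and the running-time accounting is fine. The paper skips your inside/outside classification: it simply observes that every hull-bridge is an outer tangent between $\ch_p[i]$ and $\ch_h[i+1]$ (or between $\ch_h[i]$ and $\ch_p[i+1]$) for some~$i$, computes all $O(\log n)$ such tangents, and discards the spurious ones by a constant-time local test on the incident edges.

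There is, however, a genuine geometric error in your description. You assert that $\partial\ch$ consists of the outside arcs themselves, joined either at an intersection point (no new edge) or by a hull-bridge, and you therefore skip adjacencies at intersection points. This is false: two consecutive outside arcs that share an intersection point can still require a hull-bridge. Take $\ch_p=[-3,3]\times[-1,1]$ and $\ch_h=[-1,1]\times[-3,3]$. The four intersection points are $(\pm 1,\pm 1)$, every pair of consecutive outside arcs meets at one of them, yet the convex hull of $\ch_p\cup\ch_h$ is an octagon with four hull-bridges and with none of the intersection points on its boundary. Under your rule you would output no hull-bridges at all. The correct picture is that $\partial\ch$ consists of \emph{sub}-arcs of the outside arcs, and a hull-bridge may shortcut across a shared intersection point. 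The easy fix---and what the paper does---is to compute the outer tangent for \emph{every} consecutive pair of arcs and then test locally whether it is an actual edge of~$\ch$.
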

\begin{proof}
	Let  $\langle p_1,\ldots, p_m\rangle$ be the sequence of the intersection points of the boundaries of $\ch_p$ and $\ch_h$ sorted along the boundary of the convex hull of the intersection points with $m=O(\log n)$. 
	For an index $i$ with $1\leq i< m$, We use $\ch_p[i]$ and $\ch_h[i]$ to denote the parts of the boundaries
	of $\ch_p$ and $\ch_h$ from $p_i$ to $p_{i+1}$, respectively, in clockwise order along their boundaries.
	
	Every hull-bridge is an outer tangent of $\ch_p[i]$ and $\ch_h[i+1]$ or an outer tangent of $\ch_h[i]$ and $\ch_p[i+1]$
	for an index $1\leq i<m$.
	Therefore, it suffices to compute all outer tangents of $\ch_p[i]$ and $\ch_h[i+1]$ (and $\ch_h[i]$ and $\ch_p[i]$). 
	Note that some of the outer tangents are not hull-bridges, but we can determine whether an outer tangent
	is a hull-bridge or not in constant time by considering the edges of $\ch_p$ and $\ch_h$ incident to the endpoints
	of the outer tangent.
	
	We can compute the outer tangents of $\ch_p[i]$ and $\ch_h[i+1]$ in $O(\log n)$ time using the algorithm in~\cite{Overmars-1981} since we have a binary search
	tree of height $O(\log n)$ representing $\ch_p$ (and $\ch_h$). Therefore, we can compute all hull-bridges in $O(\log^2 n)$ time.
\end{proof}

As a result, we obtain a binary search tree of height $O(\log n)$
representing the convex hull $\ch$ of $P\cap Q$. We can
compute the length of each polygonal curve connecting two hull-bridges in $O(\log n)$ time by
Lemma~\ref{lem:basic-op} and the fact that there are $O(\log n)$ node-bridges lying on $\ch$.
Therefore, we have the following theorem.
\begin{theorem}
  Given a set $P$ of $n$ points and a set $\oset$ of $k$ orientations  in the plane,
  we can construct a data structure with $O(nk^3\log^2 n)$ space in $O(nk^3\log^2 n)$ time
  that allows us to compute the perimeter of the convex hull of 
  $P\cap Q$ in $O(\log^2 n)$ time for any 
  query $\oset$-oriented triangle $Q$.
\end{theorem}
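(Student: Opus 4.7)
The plan is to build, for every ordered triple $(o_1,o_2,o_3)$ of orientations in $\oset$, one three-layer grid-like range tree as described in Section~3.1, together with the bridge and polygonal-chain-length information at each level-3 node. There are $O(k^3)$ such triples, and for each one the tree itself has $O(n\log^2 n)$ nodes (a balanced level-1 tree, plus contracted level-2 trees whose sizes sum to $O(n\log n)$, plus contracted level-3 trees whose sizes sum to $O(n\log^2 n)$). The bridges and chain lengths at each level-3 node can be filled in bottom-up: the cost of processing a node $v$ is proportional to the number of vertices on $\ch(u_1)\cup\ch(u_2)$ that do not appear on $\ch(v)$, and by the charging argument of Section~2.1, each point of $P$ is charged at most once along its root-to-leaf path. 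Hence a single tree takes $O(n\log^2 n)$ space and construction time, and summing over triples yields $O(nk^3\log^2 n)$ in total.

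For the query, given an $\oset$-oriented triangle $Q$, I would read off its three edge orientations $(o_1,o_2,o_3)$, look up the corresponding three-layer grid-like range tree, and run the query algorithm of Section~3.2. This returns the $O(\log^2 n)$ canonical cells of $Q$, partitioned into $\mathcal{V}_p$ (the level-2 parallelogram cells) and $\mathcal{V}_h$ (the level-3 $\otuple$-oriented polygon cells). Applying Corollary~\ref{cor:two-orientations} to the aligned parallelograms gives a balanced search tree for $\ch_p$ in $O(\log^2 n)$ time, and the extreme-cell scan justified by Lemma~\ref{lem:correctness} gives a search tree for $\ch_h$ in $O(\log^2 n)$ time. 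Merging the two hulls via Lemmas~\ref{lem:intersection} and~\ref{lem:hull-bridges} produces a height-$O(\log n)$ search tree representing $\ch=\ch(P\cap Q)$ in an additional $O(\log^2 n)$ time.

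It remains to compute the perimeter. The boundary of $\ch$ consists of $O(\log n)$ hull-bridges, each of which has known endpoints and hence contributes a length computable in $O(1)$ time, together with $O(\log n)$ polygonal chains, each of which lies entirely on $\ch(v)$ for some canonical cell $v\in\mathcal{V}$ and joins two points that are either node-bridge endpoints or intersection points from the merge. By Lemma~\ref{lem:basic-op-3d}, each such chain length can be extracted in $O(\log n)$ time, giving $O(\log^2 n)$ total. Summing all contributions gives the perimeter in $O(\log^2 n)$ time.

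The step I expect to require the most care is the perimeter computation, specifically verifying that every chain whose length we need is of the form handled by Lemma~\ref{lem:basic-op-3d}: its endpoints must be vertices of the convex hull $\ch(v)$ stored at a canonical cell. For endpoints that are bridge endpoints this is automatic, but for endpoints that are intersection points of $\ch_p$ and $\ch_h$ produced by the merge, one must observe that such a point lies on a bridge of (say) $\ch_p$, and thus splitting the chain at this point amounts to locating it within the two hulls incident to that bridge — an $O(\log n)$ operation per intersection, and only $O(\log n)$ intersections occur by Lemma~\ref{lem:intersection}. This keeps the total query cost at $O(\log^2 n)$, completing the proof.
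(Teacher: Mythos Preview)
Your proposal is correct and follows essentially the same approach as the paper: build one three-layer grid-like range tree per orientation triple, compute $\ch_p$ and $\ch_h$ separately, merge them via Lemmas~\ref{lem:intersection} and~\ref{lem:hull-bridges}, and then sum chain lengths using Lemma~\ref{lem:basic-op-3d}.

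One clarification on your final paragraph: the intersection points of $\partial\ch_p$ and $\partial\ch_h$ never end up as endpoints of the chains whose lengths you must compute. They are used only to partition the boundaries into arcs $\ch_p[i]$, $\ch_h[i]$ so that the hull-bridges can be found as outer tangents (Lemma~\ref{lem:hull-bridges}). The endpoints of those tangents are genuine vertices of $\ch_p$ or $\ch_h$, hence vertices of some $\ch(v)$. Thus every maximal piece of $\partial\ch$ that is not a bridge (hull- or node-) lies on a single $\ch(v)$ and has both endpoints among the vertices of $\ch(v)$, so Lemma~\ref{lem:basic-op-3d} applies directly without the extra locating step you describe. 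This is how the paper handles it as well; your workaround is unnecessary but harmless.
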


\section{\texorpdfstring{$\oset$}{O}-oriented Polygon Queries for Convex
  Hulls}\label{sec:perimeter}
The data structure in Section~\ref{sec:tri} can be used for more
general queries.  We are given a set $P$ of $n$ points in the plane
and a set $\oset$ of $k$ orientations. Let $Q$ be a query $\oset$-oriented
convex $s$-gon. Since $Q$ is convex, $s$ is at most $2k$.
Assume that we are given the three-layer
grid-like range tree on $P$ with respect to the set $\oset$ including
the axis-parallel orientations.  We want to compute the perimeter of
the convex hull of $P\cap Q$ in $O(s\log^2 n)$ time. 

We draw vertical line segments through the vertices of $Q$ to
subdivide $Q$ into at most $2k$ trapezoids.  We subdivide each
trapezoid further using the horizontal lines passing through its
vertices into at most two triangles and one parallelogram.  
The edges of a triangle and a parallelogram, say $\triangle$, have 
orientations in the set $\oset$ including the axis-parallel orientations.
Thus, we can compute the convex
hull of $\triangle\cap P$ in $O(\log^2 n)$ time and represent it using
a binary search tree of height $O(\log n)$.  By
Lemma~\ref{lem:overmars}, we can compute the convex hull of the points
contained in each trapezoid in $O(s\log^2 n)$ time in total and
represent them using balanced binary search trees of height
$O(\log n)$.

Let $A_1,\ldots,A_t$ be the trapezoids from the leftmost one to the
rightmost one for $t\leq k$.  We consider the trapezoids one by one
from $A_1$ to $A_t$.  Assume that we have just handled the trapezoid
$A_i$ and we want to handle $A_{i+1}$.  Assume further that we already
have the convex hull $\ch_i$ of the points contained in $A_j$ for all
$j\leq i$.  Since the convex hull of the points in $A_{i+1}$ is
disjoint from $\ch_i$, we can compute $\ch_{i+1}$ in $O(\log n)$ time
using Lemma~\ref{lem:overmars}.  In this way, we can compute the
convex hull of $P\cap Q$ in $O(s\log^2 n)$ time in total.  Moreover,
we can compute its perimeter in the same time as we did before.
If $s$ is a constant
as for the case of $\oset$-oriented triangle queries, 
it takes only $O(\log^2 n)$ time.

\begin{theorem}
  Given a set $P$ of $n$ points in the plane and a set $\oset$ of $k$
  orientations, we can construct a data structure with
  $O(nk^3\log^2 n)$ space in $O(nk^3\log^2 n)$ time that allows us to
  compute the perimeter of the convex hull of $P\cap Q$ in
  $O(s\log^2 n)$ time for any $\oset$-oriented convex 
    $s$-gon.
\end{theorem}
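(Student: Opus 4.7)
The plan is to reduce an $\oset$-oriented convex polygon query to a small number of triangle and parallelogram queries that the data structure of Section~\ref{sec:tri} already handles. As preprocessing I would build the three-layer grid-like range tree of Section~\ref{sec:tri} with respect to every 3-tuple of orientations drawn from the enlarged set $\oset' = \oset \cup \{\text{horizontal},\text{vertical}\}$. Since $|\oset'| = k + 2$, the space and construction-time bounds from Section~\ref{sec:tri} still read $O(nk^3\log^2 n)$.

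At query time I decompose $Q$ as follows. First, drop a vertical chord through every vertex of $Q$; since $Q$ is a convex $s$-gon, this produces at most $s \le 2k$ vertical trapezoids $A_1,\ldots,A_t$ listed left to right. Each $A_i$ has two vertical sides and two non-vertical sides whose orientations lie in $\oset$. Then drop a horizontal chord through every non-vertical vertex of $A_i$, cutting $A_i$ further into at most two triangles and one parallelogram, each bounded by edges whose orientations are all in $\oset'$. Each such piece is an $\oset'$-oriented triangle or parallelogram of constant complexity, so by the triangle-query algorithm of Section~\ref{sec:tri} the convex hull of its intersection with $P$ is returned in $O(\log^2 n)$ time, represented by a binary search tree of height $O(\log n)$. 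Summed over the $O(s)$ pieces this step takes $O(s\log^2 n)$ time.

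The remaining task is to assemble these $O(s)$ hulls into $\ch(P\cap Q)$. I would first merge, within each trapezoid $A_i$, the (at most three) hulls of its sub-pieces into a single hull $\widehat{\ch}_i$: the sub-pieces being pairwise disjoint, their hulls are disjoint and can be combined in $O(\log n)$ per merge using the outer-tangent algorithm of Overmars and van Leeuwen. Then I sweep left to right, maintaining the running convex hull $\ch_i$ of $\bigcup_{j\le i}(P\cap A_j)$; because $A_{i+1}$ lies strictly to the right of $\bigcup_{j\le i} A_j$, the hulls $\ch_i$ and $\widehat{\ch}_{i+1}$ are disjoint, so one more $O(\log n)$ tangent computation produces $\ch_{i+1}$. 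The total merging cost is $O(s\log n)$, and the final hull is represented by a search tree of height $O(\log n)$ with $O(s)$ bridges lying on the boundary. The perimeter is then obtained by summing the bridge lengths together with $O(s)$ arc-length queries inside individual canonical-cell hulls, each of which costs $O(\log n)$ via the basic-op lemma proven earlier (Lemma~\ref{lem:basic-op-3d}), for a total of $O(s\log^2 n)$.

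The main obstacle I foresee is bookkeeping: I must verify that every piece produced by the vertical/horizontal decomposition really has all its sides in $\oset'$ so that the Section~\ref{sec:tri} query applies verbatim, and I must ensure that the repeated Overmars--van Leeuwen merges preserve the $O(\log n)$ height of the working search tree (so that later tangent and arc-length queries stay logarithmic). Both issues are handled cleanly by enlarging $\oset$ to $\oset'$ at preprocessing time and by always merging the fresh trapezoidal hull into the current left-to-right hull rather than combining subtrees in an unbalanced fashion; neither affects the stated resource bounds.
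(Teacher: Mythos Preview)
Your proposal is correct and follows essentially the same route as the paper: enlarge $\oset$ by the two axis-parallel orientations, vertically slice $Q$ into $O(s)$ trapezoids, split each trapezoid by horizontal chords into at most two triangles and one parallelogram, invoke the Section~\ref{sec:tri} triangle query on each piece, and then merge left to right via Overmars--van Leeuwen. One small imprecision: the final hull boundary is not composed of $O(s)$ canonical-cell arcs but $O(s\log n)$ of them (each triangle or parallelogram hull from Section~\ref{sec:tri} already carries $O(\log n)$ node-bridges), so the perimeter step is really $O(s\log n)$ invocations of Lemma~\ref{lem:basic-op-3d}; this still fits in the stated $O(s\log^2 n)$ budget, so the argument stands.
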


As mentioned in Introduction, our data structure can be used to
improve the algorithm and space requirement by Abrahamsen et
al.~\cite{Abrahamsen-2017}.  They considered the following problem:
Given a set $P$ of $n$ points in the plane, partition $P$ into two
subsets $P_1$ and $P_2$ such that the sum of the perimeters of
$\ch(P_1)$ and $\ch(P_2)$ is minimized, where $\ch(A)$ is the convex hull of a
point set $A$.  They gave an $O(n\log^4 n)$-time algorithm for this
problem using $O(n\log^3 n)$ space.  Using our data structure, we can
improve their running time to $O(n\log^2 n)$ and their space
complexity to $O(n\log^2 n)$.

\begin{corollary}
  Given a set $P$ of $n$ points in the plane, we can compute a minimum
  perimeter-sum bipartition of $P$ in $O(n\log^2 n)$ time using
  $O(n\log^2 n)$ space.
\end{corollary}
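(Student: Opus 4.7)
The plan is to verify that our data structure can be dropped into the algorithmic framework of Abrahamsen et al.~\cite{Abrahamsen-2017} in place of theirs, and then retally the resulting time and space. First I would recall the skeleton of their algorithm: it reduces the minimum perimeter-sum bipartition problem to repeatedly evaluating the perimeter of $\ch(P\cap Q)$ where $Q$ is a $5$-gon whose edges have three predetermined orientations. Their bound of $O(n\log^4 n)$ decomposes into roughly $O(n\log n)$ evaluations of such perimeter queries plus lower-order combinatorial work, where each perimeter query on their structure costs $O(\log^4 n)$ and requires $O(n\log^3 n)$ space in preprocessing (this is exactly how the $O(n\log^4 n)$ time and $O(n\log^3 n)$ space arise in their analysis).

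Next I would instantiate the data structure of Section~\ref{sec:perimeter} with the set $\oset$ consisting of the three predetermined orientations used by Abrahamsen et al., so $k=3$ is a constant and $s\le 5$. By the theorem immediately preceding this corollary, preprocessing then takes $O(nk^3\log^2 n)=O(n\log^2 n)$ time and uses $O(n\log^2 n)$ space, while a single perimeter query on any $\oset$-oriented convex $s$-gon with $s\le 5$ costs $O(s\log^2 n)=O(\log^2 n)$ time.

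Plugging these bounds into the same skeleton, the query cost drops from $O(\log^4 n)$ to $O(\log^2 n)$, so the total time spent on queries becomes $O(n\log^2 n)$ instead of $O(n\log^4 n)$, and the preprocessing space drops from $O(n\log^3 n)$ to $O(n\log^2 n)$. The combinatorial bookkeeping in Abrahamsen et al.'s algorithm that is independent of the perimeter-query data structure is already bounded by $O(n\log^2 n)$, so it does not become the new bottleneck, and the claimed $O(n\log^2 n)$ time and $O(n\log^2 n)$ space follow.

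The main thing to check, and the step I would expect to be most delicate, is that our data structure really is a black-box substitute for theirs: namely that (i) the queries issued by their algorithm are all $\oset$-oriented convex $s$-gons with $s$ bounded (and in particular that the three orientations they fix can be declared in advance as the set $\oset$), and (ii) that their algorithm needs only the perimeter value, not any additional structural information about $\ch(P\cap Q)$ that their own data structure may expose but ours does not. Both points follow from their stated interface with the data structure, so the substitution is valid and the corollary is immediate.
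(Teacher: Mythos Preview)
Your approach is exactly what the paper does: it simply asserts that substituting the new data structure into the framework of Abrahamsen et al.\ yields the improved bounds, without spelling out any further details. So the high-level plan is correct and matches the paper.

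However, your bookkeeping contains an internal inconsistency. You say the Abrahamsen et al.\ algorithm performs ``roughly $O(n\log n)$'' perimeter queries, each costing $O(\log^4 n)$ on their structure; but $O(n\log n)\cdot O(\log^4 n)=O(n\log^5 n)$, not the $O(n\log^4 n)$ you are trying to account for. Likewise, $O(n\log n)$ queries at $O(\log^2 n)$ each would give $O(n\log^3 n)$, not the $O(n\log^2 n)$ you claim afterwards. The number of perimeter queries in their algorithm is $O(n)$, not $O(n\log n)$; with that correction your tally goes through: $O(n)\cdot O(\log^2 n)=O(n\log^2 n)$ for the queries, plus $O(n\log^2 n)$ preprocessing, giving the stated bound. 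The rest of your checks (constant $k=3$, $s\le 5$, only the perimeter value is needed) are fine.
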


\subsection{\texorpdfstring{$\oset$}{O}-oriented Polygon Queries for the Areas of Convex
	Hulls}
\label{ssec:area}
We can modify our data structure to compute the area of the convex
hull of $P\cap Q$ for any $\oset$-oriented convex polygon query $Q$ 
without increasing the time and space complexities.

The modification of the data structure is on the information stored in 
each node of the grid-like range trees of $P$.  Let $v$ be a node of a
level-3 tree of a grid-like range tree $T$.  Without loss of
generality, we assume that the axis of the level-3 trees of $T$ is 
parallel to the $x$-axis.  Let $u_1$ and $u_2$ be the two children of
$v$ such that $B(u_1)$ lies above $B(u_2)$.  We use $\ch(v)$ to denote
the convex hull of the points contained in $B(v)$.

We store its two bridges and the area of the convex hull of each
polygonal chain of $\ch(v)$ lying between two bridges.  In addition,
we store the area of the convex hull of each polygonal chain
connecting an endpoint $e$ of a bridge of $\ch(v)$ and an endpoint
$e'$ of a bridge of $\ch(p(v))$ for the parent node $p(v)$ of $v$ with
$e,e'\in B(v)$ along the boundary of $\ch(v)$.  We do this for every
endpoint of the bridges of $\ch(v)$ and $\ch(p(v))$ that are contained
in $B(v)$.  Therefore, we have the following operation.

\begin{lemma}\label{lem:basic-op-area}
	Given a node $v$ of a level-3 tree 
	and two vertices $x, y$ of
	$\ch(v)$, we can compute the area of the convex hull of the part of
	the boundary of $\ch(v)$ from $x$ to $y$ in clockwise order along
	the boundary of $\ch(v)$ in $O(\log n)$ time.
\end{lemma}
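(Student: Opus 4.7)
The plan is to mirror the proof of Lemma~\ref{lem:basic-op} at the structural level (same traversal of the level-3 tree along the root-to-leaf paths for $x$ and $y$, same case analysis according to whether $\gamma$ crosses zero, one, or both bridges of $\ch(v)$, and the same $O(\log n)$ decomposition of $\gamma$ into stored pieces), and only substitute the final arithmetic: instead of summing lengths of pieces, sum area-contributions of pieces. The only nontrivial new issue is that, unlike lengths, the areas of the convex hulls of two sub-chains do \emph{not} add up to the area of the convex hull of their concatenation, since one must also account for the triangle formed by the junction point together with the two endpoints.

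To sidestep this, I would work with the shoelace/signed-area formula. For any directed edge $pq$, set $\omega(pq)=x_p y_q - x_q y_p$. For a convex chain $\gamma$ from $a$ to $b$ taken in clockwise order along $\ch(v)$, the area of the convex hull of $\gamma$ is
\[
\tfrac{1}{2}\Bigl|\,\omega(ba)\;+\;\sum_{e\in E(\gamma)}\omega(e)\,\Bigr|,
\]
where $E(\gamma)$ is the set of directed edges of $\gamma$. The crucial point is that $\sum_{e\in E(\gamma)}\omega(e)$ is additive under concatenation of chains, even though the area itself is not. Moreover, the value stored at a node for a sub-chain $\gamma'$ from $a'$ to $b'$ (namely, the area of $\text{conv}(\gamma')$) determines $\sum_{e\in E(\gamma')}\omega(e)$ in constant time from the stored endpoints $a',b'$, because the orientation of the chain is fixed (clockwise on the enclosing convex hull).

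With this observation, the traversal proceeds exactly as in Lemma~\ref{lem:basic-op}: descend from $v$ along the search path for $x$ in the level-3 tree, and symmetrically for $y$, recursing into the child whose region contains the endpoint and using the stored sub-chain data in the sibling side. This produces $O(\log n)$ stored pieces that partition $\gamma$ (together with $O(\log n)$ bridge edges, which contribute their own $\omega$ values directly). For each piece I recover its $\omega$-sum in $O(1)$ time from the stored area and its known endpoints; I accumulate these as I traverse, then add the single closing contribution $\omega(yx)$ at the end, take absolute value, and halve. Since the total number of nodes visited and the number of pieces processed are both $O(\log n)$, the whole computation runs in $O(\log n)$ time.

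The main obstacle is precisely the non-additivity of areas: one must realize that the correct quantity to propagate up the recursion is not an area but the shoelace-style signed sum $\sum \omega(e)$, and only convert to an area at the very end. Once this reformulation is in place, every other step is an entirely routine transcription of the proof of Lemma~\ref{lem:basic-op} with ``length'' replaced by ``$\omega$-sum''.
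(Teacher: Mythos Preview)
Your proposal is correct and follows the same structural recursion as the paper (the $O(\log n)$ decomposition of $\gamma$ into stored sub-chains via the two root-to-leaf paths). The point of departure is how you resolve the non-additivity of area. You switch to the shoelace signed sum $\sum_e \omega(e)$, recover it in $O(1)$ from each stored piece-area together with its two endpoints, accumulate these additive contributions, and convert back to an area only at the end. The paper instead stays with areas throughout: after decomposing $\gamma$ into $O(\log n)$ pieces, it observes that the convex hull of $\gamma$ is the interior-disjoint union of the convex hulls of the individual pieces together with the convex hull of the $O(\log n)$ piece-endpoints, so it simply adds all of those areas. Your route is more algebraic and makes the additivity manifest (and would extend to any edge-additive functional); the paper's route is more geometric and avoids the inversion from stored area back to a signed sum. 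Both run in $O(\log n)$ time.
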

\begin{proof}
	The proof is similar to the proof of Lemma~\ref{lem:basic-op}.  We
	obtain two paths such that the boundary of $\ch(v)$ is decomposed
	into $O(\log n)$ pieces each of which corresponds to a node in the
	two paths in $O(\log n)$ time.
	
	In the proof of Lemma~\ref{lem:basic-op}, we simply add the lengths
	of all such pieces, each in $O(\log n)$ time, since the length
	of each piece is stored in a node of the paths.  Instead, we add the
	areas of the convex hulls of all such pieces. Then we compute the
	area of the convex hull of the endpoints of all such pieces. Since
	all such convex hulls are pairwise interior disjoint,
	the total sum is the area we want to compute. Therefore, we can compute the
	area of the convex hull of the part of
	the boundary of $\ch(v)$ from $x$ to $y$ in $O(\log n)$ time.
\end{proof}

For an $\oset$-oriented convex $s$-gon query $Q$, we can obtain
$O(s\log n)$ pieces  
of the convex hull of $P\cap Q$ each of which is a
straight line, or a polygonal curve lying on the boundary of $\ch(v)$
for some node $v$ of the grid-like range trees in $O(s\log^2 n)$ time 
due to Section~\ref{sec:perimeter}. These pieces are sorted along the
boundary of $\ch$.
By construction, the convex hulls of all such pieces
and the convex hull of all straight lines contain the convex hull of $P\cap Q$
and they are pairwise interior disjoint.
The area of the convex hull of $P\cap Q$ is the sum of
the areas of these convex hulls.
Therefore, we can compute the area of the
convex hull of $P\cap Q$ in $O(s\log^2 n)$ time.

%

\begin{theorem}
	Given a set $P$ of $n$ points and a set $\oset$ of $k$
	orientations in the plane, we can construct a data structure with
	$O(nk^3\log^2 n)$ space in $O(nk^3\log^2 n)$ time that allows us to
	compute the area of the convex hull of $P\cap Q$ in $O(s\log^2 n)$ time for any
	$\oset$-oriented convex $s$-gon.
\end{theorem}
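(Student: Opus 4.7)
The plan is to mirror the perimeter algorithm of Section~\ref{sec:perimeter} with two essential modifications: the information stored at each node of the grid-like range trees becomes areal rather than length-based, and the final assembly replaces summation of chain lengths by a decomposition of $\ch(P\cap Q)$ into pairwise interior-disjoint convex pieces whose areas sum to the answer. The preprocessing bounds will carry over verbatim from Section~\ref{sec:tri}.

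First I construct the same three-layer grid-like range trees as in Section~\ref{sec:perimeter}, over every 3-tuple drawn from $\oset$ augmented by the two axis-parallel orientations. At each node $v$ of a level-3 tree I store, in place of lengths, (i) the two bridges of $\ch(v)$, (ii) the area of the convex hull of each polygonal chain of $\ch(v)$ lying between the two bridges, and (iii) for every endpoint of a bridge of $\ch(p(v))$ that is contained in $B(v)$, the area of the convex hull of the chain connecting it, along $\partial\ch(v)$, to an endpoint of a bridge of $\ch(v)$. Only a constant amount of information is stored per node, and the accounting of Section~\ref{sec:tri} gives $O(nk^3\log^2 n)$ space and preprocessing time unchanged.

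For the query, I subdivide $Q$ by horizontal and vertical cuts through its vertices into $O(s)$ triangles and parallelograms and invoke the perimeter query of Section~\ref{sec:perimeter}. In time $O(s\log^2 n)$ this returns a height-$O(\log n)$ balanced search tree representing $\partial\ch(P\cap Q)$ as a cyclic concatenation of $m=O(s\log n)$ pieces, where each piece is either a straight segment (a node-bridge, bridge, or hull-bridge) or a convex polygonal subchain lying on the boundary of $\ch(v)$ for some node $v$. For area assembly I then introduce the following decomposition: let $R$ be the convex polygon whose vertices are the endpoints of the $m$ pieces in cyclic order, and for each polygonal-curve piece $\gamma$ with endpoints $a,b$ let $E_\gamma$ be the region bounded by $\gamma$ and the chord $ab$. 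Because each piece lies on $\partial\ch(P\cap Q)$, every $\gamma$ is a convex subchain of the global hull, each $E_\gamma$ is convex and lies on the outer side of the chord $ab$, the $E_\gamma$'s are pairwise interior-disjoint and interior-disjoint from $R$, and their union with $R$ equals $\ch(P\cap Q)$. Therefore
\[
\mathrm{Area}(\ch(P\cap Q)) \;=\; \mathrm{Area}(R) \;+\; \sum_{\gamma}\mathrm{Area}(E_\gamma).
\]
The area of $R$ is computed from its $O(s\log n)$ vertices by the shoelace formula in $O(s\log n)$ time, and each $\mathrm{Area}(E_\gamma)$ is precisely the quantity returned by Lemma~\ref{lem:basic-op-area} in $O(\log n)$ time. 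The total query time is thus $O(s\log^2 n)$ as claimed.

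The step requiring the most care is the correctness of this ear decomposition: one must verify that every polygonal piece emitted by the perimeter query really is a contiguous subchain of $\partial\ch(P\cap Q)$, not merely of some intermediate $\ch(v)$, so that $\gamma$ is convex, the chord $ab$ lies inside $\ch(P\cap Q)$, and the ears do not overlap each other or $R$. This follows from the way Section~\ref{sec:perimeter} glues $\ch(v)$-fragments by bridges, hull-bridges, and outer tangents between trapezoid sub-hulls: by construction exactly those sub-arcs preserved in the final BST appear verbatim on $\partial\ch(P\cap Q)$. Once this invariant is spelled out, everything else is bookkeeping analogous to the perimeter case.
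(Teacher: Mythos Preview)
Your proposal is correct and follows essentially the same approach as the paper: store areal information (the area of the convex hull of each stored sub-chain) in place of lengths at each level-3 node, invoke the perimeter-query machinery of Section~\ref{sec:perimeter} to obtain the $O(s\log n)$ boundary pieces, and then recover $\mathrm{Area}(\ch(P\cap Q))$ as the area of the inner polygon on the piece endpoints plus the areas of the convex hulls of the individual curve pieces, exactly as the paper does (and as it already does internally in the proof of Lemma~\ref{lem:basic-op-area}). Your explicit naming of $R$ and the ears $E_\gamma$, and your remark that one must check each emitted piece is a genuine sub-arc of $\partial\ch(P\cap Q)$, make the argument somewhat cleaner than the paper's terse version, but the substance is identical.
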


\subsection{\texorpdfstring{$\oset$}{O}-oriented Polygon Queries for Reporting Convex
	Hulls}
\label{ssec:report}
We can use our data structure to report the edges of the convex hull
of $P\cap Q$ for any $\oset$-oriented convex polygon query $Q$ without
increasing the space and time complexities, except an additional $O(h)$ term
in the query time for reporting the convex hull with $h$ edges,
due to the following lemma.
\begin{lemma}\label{lem:basic-op-report}
	Given a node $v$ of a level-3 tree and two vertices $x, y$ of
	$\ch(v)$, we can report all edges of the convex hull of the part of
	the boundary of $\ch(v)$ from $x$ to $y$ in clockwise order along
	its boundary in $O(\log n+h(v))$ time, where $h(v)$ is the number of
	the edges reported.
\end{lemma}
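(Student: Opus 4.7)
The plan is to mirror the proof of Lemma~\ref{lem:basic-op}, but emit edges rather than sum stored lengths. As in that lemma, I would first split the boundary arc $\gamma$ from $x$ to $y$ in clockwise order at the at most two bridges of $\ch(v)$ lying on it, and reduce to enumerating a single polygonal sub-chain $\gamma_1$; without loss of generality $\gamma_1$ has $x$ at one end and an endpoint of a bridge of $\ch(v)$ at the other.

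To enumerate the edges of $\gamma_1$, I would descend the level-3 tree from $v$ toward the leaf corresponding to $x$. At each visited node $v'$, I inspect its at most two stored bridges and emit those that lie on the current sub-chain; this descent decomposes $\gamma_1$ into $O(\log n)$ canonical arcs, each a contiguous portion of $\ch(v')$ for some $v'$ on the descent path whose endpoints are bridge endpoints of $v'$ (or incident to $x$). The edges of a canonical arc are exactly the bridges of descendants of $v'$ lying on that arc, since every edge of $\ch(v)$ is the bridge of a unique deepest descendant where it is introduced. I would therefore list them by a recursive descent that, at each visited node, emits the stored bridges lying on the current arc and then recurses into whichever child or children contain a portion of it.

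The main obstacle is meeting the claimed bound $O(\log n + h(v))$ rather than the naive $O(\log n + h(v)\log n)$. A recursive call that branches into both children of a visited node $v''$ forces a bridge of $\ch(v'')$ to lie on the current arc and be emitted, so two-child nodes are charged directly to output edges and contribute $O(h(v))$ in total. The remaining single-child descent steps emit nothing and can only occur while the recursion is still following the level-3 search paths toward the leaves containing $x$ and $y$; I would argue, via the same path-based decomposition used in Lemma~\ref{lem:basic-op}, that the single-child steps over the entire query collapse onto these two search paths of length $O(\log n)$, which gives the residual $O(\log n)$ overhead and completes the bound.
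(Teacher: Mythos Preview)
Your proposal is correct and takes essentially the same approach as the paper: decompose the arc along the two root-to-leaf search paths of Lemma~\ref{lem:basic-op} into $O(\log n)$ canonical pieces, then enumerate each piece by a DFS in the corresponding subtree. The paper's own proof is in fact terser than yours---it simply asserts that the DFS over each piece runs in time linear in its output---so your explicit charging of two-child recursion nodes to emitted bridges and of single-child descent steps to the two search paths already goes a bit beyond what the paper spells out.
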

\begin{proof}
	The proof is similar to the proof of Lemma~\ref{lem:basic-op}.  We
	obtain two paths such that the boundary of $\ch(v)$ is decomposed
	into $O(\log n)$ pieces each of which corresponds to a node in the
	two paths in $O(\log n)$ time.  For each such node $v$, we can
	report all edges of a part of the boundary of $\ch(v)$ in order once
	we have the endpoints of the part in time linear to the output size by
	traversing the subtree rooted at $v$ in a DFS order.  Therefore, we
	can report all edges of $\ch(v)$ from $x$ to $y$ in $O(\log n+h(v))$
	time in total.
\end{proof}

For an $\oset$-oriented convex $s$-gon query $Q$,
we can decompose the
boundary of $\ch(v)$ into $O(\log n)$ pieces each of which is a
straight line, or a polygonal curve lying on the boundary of $\ch(v)$
for some node $v$ of the grid-like range trees in $O(\log^2 n)$ time
due to Section~\ref{sec:perimeter}.  Using
Lemma~\ref{lem:basic-op-report}, we report the edges of the convex
hull of $P\cap Q$ in $O(s\log^2 n+h)$ time, where $h$ is the number of
the edges of the convex hull.
\begin{theorem}
	Given a set $P$ of $n$ points and a set $\oset$ of $k$
	orientations in the plane, we can construct a data structure with
	$O(nk^3\log^2 n)$ space in $O(nk^3\log^2 n)$ time that allows us to
	report all edges of the convex hull of $P\cap Q$ in $O(s\log^2 n+h)$
	time for any $\oset$-oriented convex $s$-gon, where $h$ is the
	number of edges of the convex hull.
\end{theorem}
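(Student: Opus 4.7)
The plan is to reuse the three-layer grid-like range tree of Section~\ref{sec:tri} verbatim, without adding any new stored information. This immediately gives the claimed $O(nk^3\log^2 n)$ space and preprocessing time. The only new query-time ingredient will be Lemma~\ref{lem:basic-op-report}, which has already been established; everything else piggybacks on the perimeter query algorithm of Section~\ref{sec:perimeter}.

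First I would run the perimeter query algorithm of Section~\ref{sec:perimeter} on $Q$: decompose $Q$ by vertical lines through its vertices into at most $2k$ trapezoids, split each trapezoid into two triangles and a parallelogram whose sides lie in $\oset$, compute the per-piece convex hulls using the three-layer grid-like range trees, and merge them along the sweep via Lemma~\ref{lem:overmars}. In $O(s\log^2 n)$ time this produces a binary search tree of height $O(\log n)$ representing $\ch(P\cap Q)$ implicitly as a cyclically ordered sequence of $O(s\log n)$ building blocks. Each block is either a single straight segment (a hull-bridge between two different cells, or a node-bridge that survives on the boundary of $\ch(P\cap Q)$) or a polygonal sub-chain of $\ch(v)$ for some node $v$ of a level-3 tree, specified by two vertices $x,y$ of $\ch(v)$.

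Then I would traverse this sequence in order and emit edges. A straight-segment block is output in $O(1)$ time. For each polygonal sub-chain block, I would invoke Lemma~\ref{lem:basic-op-report} on $(v,x,y)$ to list its edges in $O(\log n + h(v))$ time, where $h(v)$ is the number of edges reported from $\ch(v)$. Summing over the $O(s\log n)$ blocks, the $O(\log n)$ per-block overhead contributes $O(s\log^2 n)$, while $\sum_v h(v)=O(h)$ since the sub-chains are interior-disjoint and together account for every edge of $\ch(P\cap Q)$ that is not a hull-bridge or node-bridge (and the bridges themselves number $O(s\log n)=O(h)+O(s\log n)$, already absorbed). Hence the query runs in $O(s\log^2 n + h)$ time.

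The only delicate point is verifying the interface between the two ingredients: the implicit representation produced in Section~\ref{sec:perimeter} must present each polygonal piece in exactly the node-plus-two-endpoint form that Lemma~\ref{lem:basic-op-report} consumes. This is already how the pieces arise inside the proofs of Lemmas~\ref{lem:basic-op} and~\ref{lem:basic-op-3d}: their recursive decomposition along a root-to-leaf path of a level-2 or level-3 tree produces, at each visited node $v$, a sub-chain of $\ch(v)$ whose endpoints are either $x$, an endpoint of a bridge of $\ch(v)$, or an endpoint of a bridge of $\ch(p(v))$ — all of them vertices of $\ch(v)$. Thus no new combinatorial structure is needed; one only has to retain these endpoints during the Section~\ref{sec:perimeter} query rather than just their pairwise distances, which is free.
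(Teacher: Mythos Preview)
Your proposal is correct and follows essentially the same approach as the paper: run the perimeter query of Section~\ref{sec:perimeter} to obtain the implicit decomposition of $\partial\ch(P\cap Q)$ into $O(s\log n)$ pieces, then invoke Lemma~\ref{lem:basic-op-report} on each piece and sum the costs. Your write-up is in fact more careful than the paper's own paragraph (which omits the $s$ factor in the piece count and time bound before restoring it in the theorem statement), and your explicit check that each piece is handed to Lemma~\ref{lem:basic-op-report} in the required $(v,x,y)$ form is a welcome clarification.
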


\bibliographystyle{plainurl}
\bibliography{paper}{}
\end{document}